\newtheorem{definition}{Definition}
\newtheorem{proposition}{Proposition}
\newtheorem{theorem}{Theorem}
\newtheorem{remark}{Remark}
\newtheorem{lemma}{Lemma}
\title{Centralized Group Equitability and Individual Envy-Freeness in the Allocation of
Indivisible Items}
\author{
Ying Wang$^{1}$ \hspace{15pt} 
Jiaqian Li$^{2}$  \hspace{15pt}
Tianze Wei$^{3}$ \thanks{Corresponding author.} \hspace{15pt} 
Hau Chan$^{4}$  \hspace{15pt}
Minming Li$^{3}$\\
 $^{1}$Columbia University \hspace{15pt}
$^{2}$Boston University \hspace{15pt}\\
$^{3}$City University of Hong Kong \hspace{15pt}
$^{4}$University of Nebraska--Lincoln\\
\small \texttt{yw4360@columbia.edu,
jiaqian@bu.edu,
t.z.wei-8@my.cityu.edu.hk}\\
\small \texttt{hchan3@unl.edu,
minming.li@cityu.edu.hk}
}
\date{}
\begin{document}
\maketitle
\begin{abstract}
We study the fair allocation of indivisible items for groups of agents from the perspectives of the agents and a centralized allocator. 
In our setting, the centralized allocator is interested in ensuring the allocation is fair among the groups and between agents. This setting applies to many real-world scenarios, including when a school administrator wants to allocate resources (e.g., office spaces and supplies) to staff members in departments and when a city council allocates limited housing units to various families in need across different communities. 
To ensure fair allocation between agents, we consider the classical \textit{envy-freeness} (EF) notion.
To ensure fairness among the groups, we define the notion of \textit{centralized group equitability} (CGEQ) to capture the fairness for the groups from the allocator's perspective. 
Because an EF or CGEQ allocation does not always exist in general, we consider their corresponding natural relaxations of \textit{envy-freeness to one item} (EF1) and \textit{centralized group equitability up to one item} (CGEQ1). 
For different classes of valuation functions of the agents and the centralized allocator, we show that allocations satisfying both EF1 and CGEQ1 always exist and design efficient algorithms to compute these allocations.
We also consider the \emph{centralized group maximin share} (CGMMS) from the centralized allocator's perspective as a group-level fairness objective with EF1 for agents and present several results.
\end{abstract}

\section{Introduction}

Fair division of indivisible items often deals with fairly allocating a set of (discrete or indivisible) items to a set of agents who have preferences over the items. 
Due to both practical and theoretical interests, fair division of indivisible items has received considerable attention in various research communities, such as economics, mathematics, and computer science, for much of the past century \citep{moulin2004fair,DBLP:journals/ai/AmanatidisABFLMVW23}. 
In practice, fair division of indivisible items has many real-world applications ranging from course allocation (i.e., for allocating schedules of courses to students) to goods division (i.e., dividing artworks or furniture among individuals), in which the Course Match \citep{budish2017course} mechanism and the Spliddit \citep{goldman2015spliddit} online platform have been developed, to provide fair allocations subject to agent preferences and appropriate fairness notions (e.g., envy-freeness and their relaxations) for the respective applications. 
In theory, fair division of indivisible items has led to the development of numerous notions such as envy-freeness up to one/any item, proportionality up to one/any item, and maximin share fair for quantifying fairness, algorithms (e.g., round-robin or envy-graph procedure \citep{lipton2004approximately}) for providing (approximately) fair allocations, and techniques for (partially) characterizing the existence of fair allocation \citep{DBLP:journals/ai/AmanatidisABFLMVW23}. 


\medskip
\noindent
\textbf{Allocator's Preference.} 
A main drawback of existing studies of fair division of indivisible items is the lack of consideration from the allocator's perspective, who is responsible for implementing the allocation and has preferences on how the items should be allocated to the set of agents \citep{bu2023fair}. 
As a result, \citep{bu2023fair} initiated the study of fair division of indivisible items from both the allocator’s and agents' perspectives, where each agent has a valuation preference over the items, and the allocator has a separate valuation preference for each agent over the items (specifying their internal values for the agent receiving the items). 
Moreover, they focus on allocations that satisfy relaxations of envy-freeness between agents under both the allocator's valuation for each agent and agent valuations. 


As a motivating example, \citep{bu2023fair} discussed the situations where the government (as the allocator) needs to distribute education resources (e.g., funding and staff members) to different schools (as agents) in which the schools and the government have separate preferences over the education resources based on their needs and macroeconomic policy for schools, respectively. In addition, \citep{bu2023fair} provided examples where a company allocates resources to different departments, an advisor allocates tasks or projects to students, and conference organizers allocate papers to reviewers that require the consideration of both the allocator's and agents' preferences. 


\medskip
\noindent
\textbf{Our Study: Centralized Allocator's Preference.} 
Building on the work of \citep{bu2023fair}, we introduce a \emph{centralized} allocator who is interested in ensuring fairness at a group level, where each agent naturally belongs to different predefined groups in the fair allocation.
For instance, building on the above-mentioned example, a school administrator, tasked with allocating limited resources (e.g., office spaces and supplies) to staff members from departments within the school \citep{perez2022role}, needs to ensure that the allocation is also fair at the department (group) level. 
A city council, tasked with allocating limited housing units to various neighborhoods in need across different communities \citep{gray1976selection}, needs to ensure the allocation is fair with respect to different communities. 
Finally, a government distributing resources to different schools needs to ensure that the allocation is fair with respect to the schools. 
Therefore, in this paper, our goal is to explore the fair division of indivisible items, which not only provides fairness for the agents but also guarantees fairness for the centralized allocator.

\begin{table}[tb]
    \centering 
    \begin{tabular}{cccc}
    \toprule
       \makecell[c]{Centralized \\ allocator's valuation} &\makecell[c]{Agents'\\ valuations}& \makecell[c]{EF1+CGEQ1}\\     
    \midrule
    Arbitrary &Identical  & \makecell[c]{$\checkmark$ (Poly time) (Thm \ref{the: idential_valuations})} \\
    \midrule
    \multicolumn{2}{c}{Ordered} & \makecell[c]{$\checkmark$  (Poly time) (Thm \ref{the: ordered_instance}) }\\
    \midrule
     Binary & \makecell[c]{Arbitrary}  & \makecell[c]{$\checkmark$ (Poly time) (Thm \ref{the: binary_ef1_CGEQ1}) }\\
    \bottomrule
    \end{tabular}
    \caption{Summary of our main results.}
    \label{tab:contribution}
\end{table}


\subsection{Our Contribution}

We study the fair division of indivisible items for groups of agents from the perspectives of the agents and the centralized allocator. 
Each agent belongs to one of the groups (e.g., based on their associations) and has an additive valuation function over the items. 
The centralized allocator has a common additive valuation function indicating their values for the items measured in standardized units (e.g., investment value, monetary amount, and space). 

To ensure fair allocation among agents, we consider the classical envy-freeness (EF) notion. 
To ensure fairness among the groups, we define the notion of centralized group equitability (CGEQ) to capture the fairness for the groups from the centralized allocator’s perspective that compares the weighted proportion of values received by each group. 
Because an EF or CGEQ allocation does not always exist---and in fact checking for a pure CGEQ outcome is computationally strong NP-hard via a reduction from the 3-partition problem, and deciding whether an allocation satisfies EF+CGEQ, EF+CGEQ1, or EF1+CGEQ is likewise NP-hard---we therefore consider their natural relaxations of envy-freeness up to one item (EF1) and centralized group equitability up to one item (CGEQ1). Following the idea from \citep{bu2023fair}, we strive to answer the following questions. 

\begin{quote}
\emph{Under which conditions can we guarantee the existence of EF1+CGEQ1 allocations? If so, can we design algorithms to compute them efficiently? } 
\end{quote} 

To address the above questions, we examine different classes of valuation functions of the agents/allocator. 
The presence of the centralized allocator introduces a fundamental shift in both the fairness notions and the algorithmic challenges involved. The techniques we develop, though sometimes inspired by classic methods such as round-robin, are nontrivial extensions that integrate allocator-aware priorities and group-level proportionality.
Specifically, our key contributions are as follows (summarized in Table~\ref{tab:contribution}):
\begin{itemize} \item When each agent has an identical valuation function, even though agents are indistinguishable in terms of preferences, the allocator's independent valuation introduces nontrivial global constraints. Our \textbf{DM Algorithm} (Algorithm~\ref{alg: staggered matching}) constructs a temporary allocation satisfying EF1 for both agents and allocator, using a match-based process guided by the allocator’s preferences. Then, it reallocates bundles to achieve CGEQ1 while preserving agent-level EF1. This dynamic reassignment highlights the allocator's role in determining group-level equity even under agent homogeneity.
\item When all agents and the allocator share the same ordinal ranking of items, we propose the \textbf{SPS Algorithm} (Algorithm~\ref{alg: ido_case}) that simultaneously addresses item distribution across groups and within groups. Despite the aligned ordering, the absolute values may differ significantly under separate allocator's and agents' valuation functions. The allocator’s proportional fairness criteria---considering each group’s value-to-size ratio---guide item assignment to balance CGEQ1 and agent-level EF1. This synchronization of dual fairness notions introduces dependencies absent in standard ordinal settings.
\item When the allocator classifies items into two types. Our \textbf{CD$^2$P Algorithm} (Algorithm~\ref{alg: Binary_allocator}) innovatively extends Round-Robin by introducing a reverse RR phase. This two-phase procedure ensures EF1 for agents while achieving CGEQ1, and cannot be reduced to standard round-robin without losing group fairness. 
\item We also propose the \textbf{Centralized Group Maximin Share} (CGMMS) as an allocator-centric optimization benchmark and seek allocations that satisfy CGMMS for the centralized allocator and EF1 for agents.
\end{itemize}

The remainder of the paper is organized as follows. In Section \ref{sec: preliminary}, we formally define the notations and fairness notions considered in our paper.
In Sections \ref{sec:: identical valuations}, \ref{sec:: ordered instance}, and \ref{sec: binary_allocator}, we study EF1+CGEQ1 allocations in identical valuations, ordered valuations, and binary valuations settings, respectively. 
In Section \ref{sec: efficiency}, we discuss EF1+CGMMS allocations.
In Section \ref{sec::conlusion}, we conclude the paper and provide future research directions.
Due to space constraints, we refer readers to the appendix for the missing proofs.

\subsection{Related Work}
There is an extensive line of work in the fair division of indivisible items. 
We refer readers to the survey \citep{DBLP:journals/ai/AmanatidisABFLMVW23} for an overview. 
Below, we review studies focusing on allocations that consider group fairness and fairness from the agents' and allocator's perspectives. 

\medskip
\noindent
\textbf{Fairness from the Agents' and Allocator's Perspectives.} 
As discussed earlier, the most relevant work is \citep{bu2023fair}, where they initiated the study of fair division of indivisible items from the perspectives of the agents and the allocator. 
There are significant differences between our paper and their model. More specifically,
\begin{itemize}
    \item \citep{bu2023fair}'s ``two-layer" paradigm requires a single allocation to satisfy two distinct sets of valuation functions for individuals (agents vs. allocator). Our model adds yet another layer on top of this: a group structure in which the allocator cares only about each group's aggregate share.
    \item When we let each group have exactly one agent, our model degenerates to their model, where the allocator has the identical valuation function. Additionally, due to the group setting in our paper, the techniques used in their paper, including envy-cycle elimination and round-robin, cannot be directly applied to our model.
    \item  A concrete example of technology transfer failure arises with \citep{bu2023fair}'s doubly-EF1 algorithm, which hinges on cycle elimination. If one naively ports that step to the group model, moving an item that happens to be a group's extreme-value good immediately violates CGEQ1. To safeguard the group-level guarantees, we must introduce a tailored combination of bundle duplication and a quota inequality, which forms the core of our Lemma \ref{lem:staggered matching-2}.
\end{itemize} 

\citep{flammini2025fair} also introduced the second valuation functions in the setting, but they focused on the loss in efficiency with respect to the second valuation functions while pursuing the fairness among agents, where the target is different from our paper.
 
\medskip
\noindent
\textbf{Group Fairness.}
Existing studies have examined group-fair division of indivisible items from only the agent perspective. 
Some works focus on the predefined group.
For example, \citep{aleksandrov2018group} defined group envy-freeness and group Pareto optimality, and studied the price of group envy-freeness.
\citep{benabbou2019fairness} considered the fair matching among different groups where each agent can pick at most one item. They studied the fairness criteria named typewise envy-freeness up to one item (TEF1), and showed that when agents have binary valuations, TEF1 allocations can be computed in polynomial time.
\citep{feige2022allocations} studied the notion of group proportional share fairness and group any price share fairness in different groups that may have different structures like laminar. \citep{manurangsi2024ordinal} studied the ordinal maximin share fairness among groups.
There are other works that did not consider the predefined group. For example, \citep{conitzer2019group} studied the group fairness among agents, where they considered any partition of agents, and showed that local optimal Nash welfare allocations satisfy two different relaxations of group fairness that they defined.
Later, \citep{aziz2021almost} extended it to the setting where items include both goods and chores.
The survey of \citep{DBLP:journals/ai/AmanatidisABFLMVW23} offers a comprehensive view of recent progress and open problems in this field.

The most related setting to the proposed study is the work of \citep{scarlett2023one}, where they studied the compatibility of individual envy-freeness and group envy-freeness from the agent perspective only. 
Moreover, they did not consider the centralized allocator and defined the group utility based on the agent's valuation function instead of the centralized allocator's valuation. 






\section{Preliminaries}
\label{sec: preliminary}




In this section, we present notations and fairness notions for the considered setting of fair division of indivisible items with a set of agents and a centralized allocator.





\subsection{Notations}

For $r \in \mathbb{N}$, let $[r] = \{1,2,\dots,r\}$. 
Let $\mathcal{O}=\{o_1,o_2,\dots,o_m\}$ be a set of $m$ indivisible items, and $\mathcal{N}=[n]$ be a set of $n$ agents. 
The set of $n$ agents is partitioned into $k \in \mathbb{N}$ groups denoted by $\mathcal{G} = (G_1, \ldots, G_k)$. Each agent belongs to exactly one group, i.e., $G_p \cap G_q = \emptyset$ for any $p,  q \in [k]$. Additionally, our setting includes a \textit{centralized} allocator.


Each agent $i \in \mathcal{N}$ has an additive valuation function $v_i: 2^{\mathcal{O}} \rightarrow \mathbb{R}_{\geq 0}$, i.e., for any $S \subseteq \mathcal{O}$, $v_i(S) = \sum_{o \in S}v_i(\{o\})$. Specifically, we assume that $v_i(\emptyset) = 0$. 
The centralized allocator has her own preferences and is endowed with an additive valuation function $u$: $2^\mathcal{O}$ $\rightarrow \mathbb{R}_{\geq 0}$, i.e., for any $S \subseteq \mathcal{O}$, $u(S) = \sum_{o \in S}u(\{o\})$ indicating their values for the items measured in standardized units (e.g., investment value, monetary amount, and space). 
Additionally, we assume that $u(\emptyset) = 0$.

For simplicity, we use $v_i(o)$ and $u(o)$ instead of $v_i(\{o\})$ and $u(\{o\})$, respectively.
Let $\Pi(n,\mathcal{O})$ denote all $n$-partitions of $\mathcal{O}$.
An allocation $\mathcal{A} = (A_1, \ldots, A_n) \in \Pi(n, \mathcal{O})$ is an $n$-partition of $\mathcal{O}$ among $n$ agents, where $A_i$ is the bundle allocated to agent $i$. We have $\bigcup_{i \in \mathcal{N}}A_{i} = \mathcal{O}$ and $A_{i} \cap A_{j} = \emptyset$ for any two agents $i \neq j$.
A fair allocation instance is denoted as $\mathcal{I} = \langle \mathcal{O}, \mathcal{N}, \mathcal{G}, \boldsymbol{v}, u \rangle$, where $\boldsymbol{v} = (v_1, \ldots, v_n)$.
Next, we provide the formal definitions of the fairness notions for the agents and the centralized allocator.

\subsection{Fairness and Efficiency Notions}

To ensure fairness among agents, we consider the classical envy-freeness (EF) notion from the agent perspective.

\begin{definition}[Envy-Freeness] An allocation $\mathcal {A}$ is envy-free (EF), if for any two distinct agents $i,j \in \mathcal{N}$, we have $v_i(A_i) \geq v_i(A_j)$.
\end{definition}

However, EF allocations do not always exist. 
Therefore, we consider a natural and commonly studied relaxation of EF, named envy-free up to one item.

\begin{definition}[Envy-Freeness up to One Item] An allocation $\mathcal {A}$ is 
envy-free up to one item (EF1) if, for any two agents $i,j \in \mathcal{N}$, $v_i(A_i) \geq v_i(A_j \setminus \{o\})$ holds for some $o \in A_j$.
\end{definition}

Next, we introduce our fairness notion from the centralized allocator's perspective, which is called centralized group equitability (CGEQ). 

\begin{definition}[Centralized Group Equitability]
     An allocation $\mathcal {A}$ is 
     centralized group equitable (CGEQ) if for any two groups $G_p, G_q \in \mathcal{G}$, $\frac{u( \bigcup_{i \in G_p}A_i)}{\lvert G_p \rvert } = \frac{u(\bigcup_{j \in G_q}A_j)}{\lvert G_q \rvert }$ holds. 
\end{definition}



This definition reflects an \emph{equitable view from the centralized allocator's perspective}, where the utility function $u(\cdot)$ represents the authority’s valuation over bundles allocated to different groups. CGEQ thus requires that each group receives, on average, the same utility as any other group, when judged by the central authority. 

\begin{remark}
We intentionally label this condition ``equitability" (CGEQ) instead of ``envy-freeness" because the focus is on equalizing welfare rather than eliminating envy. In other words, the central planner is not checking if group $G_p$ prefers group $G_q$'s bundle (which would imply a notion of group envy). Rather, she is ensuring that her common yardstick values each group's allocation equally on a per-agent basis. This perspective aligns with \citet{conitzer2019group}'s group fairness structure, but we adapt it to heterogeneous valuations by following the framework of \citet{bu2023fair}, where the allocator's valuation may differ from individual agents'. By synthesizing these approaches, we obtain a group-level fairness notion judged by the central authority's values.
\end{remark}
 

The exact fairness conditions (e.g., EF) are often too strong to satisfy with indivisible items. Here, CGEQ allocations do not always exist.
Consider an instance with only one indivisible item and two groups of equal size: any allocation gives utility to only one group, making CGEQ impossible. 
Therefore, we propose the following relaxation notion.



\begin{definition}[Centralized Group Equitability up to One Item]
     An allocation $\mathcal {A}$ is said to be centralized group equitable up to one item (CGEQ1) if, for any two groups $G_p, G_q \in \mathcal{G}$, $\frac{u(\bigcup_{i \in G_p}A_i)}{\lvert G_p \rvert } \geq \frac{u(\bigcup_{j \in G_q}A_j \setminus \{o\})}{\lvert G_q \rvert }$ holds for some item $o \in \bigcup_{j \in G_q}A_j$.
\end{definition}

The relaxed version, CGEQ1 (Centralized Group Equitability up to One Item), follows the spirit of the classic EF1 relaxation \cite{lipton2004approximately} but in terms of equitability: each group's average utility is almost equal, possibly differing only by the value of a single item. We use the term CGEQ1 (not CGEF1) to emphasize this distinction.


We are interested in allocations that satisfy EF1 from the agent's perspective and CGEQ1 from the centralized allocator's perspective.
In particular, we study computing EF1+CGEQ1 allocations in various scenarios, focusing on representative valuation functions of the agents and the centralized allocator that have each been the subject of extensive recent study in the fair‐division literature---Identical Valuation: \cite{barman2025fair, birmpas2024fair, lang2024fair, babichenko2024fair}; Ordered Valuation: \cite{neoh2025online, garg2025existence, garg2025constant, garg2025improved};  Binary Valuation: \cite{chandramouleeswaran2025fair, bismuth2024fair, deligkas2024complexity,  gorantla2023fair}. 
If the size of any group is one, i.e., $\lvert G_p \rvert = 1$ for any $G_p \in \mathcal{G}$, CGEQ1 degenerates into EF1.
In this case, our setting reduces to a special case in \cite{bu2023fair}, where they showed that an EF1+EF1 allocation can be computed in polynomial time.
Thus, we consider the case where the size of some group is not one in the following sections.

\section{Identical Valuations}\label{sec:: identical valuations}

In this section, we first consider the case where the valuations of each agent and the centralized allocator are the same, i.e., $v_1 = \cdots = v_n = u$.  In that case, our setting degenerates to that in \cite{scarlett2023one}, where they study the combination of group and individual fairness and define the group utility based on one set of valuations (i.e., the agents' valuations).





Next, we consider the case where only the valuation of each agent is the same, i.e., $v_1 = \cdots=v_n$, while the centralized allocator's valuation is arbitrary.
The motivating example would be that in textbook/course-seat allocations, agents have very similar needs for standardized items, while the allocator pursues distributional goals (access, equity).

For simplicity, let $v$ denote the valuation of each agent. In that case, the algorithm in \cite{scarlett2023one} does not apply directly since its technique heavily depends on the fact that the group utility is computed by using the agents' valuations.
However, the centralized allocator's valuation can be different from the agents' valuations in our setting.
Therefore, we propose a new algorithm named Draft-and-Match (DM), as described in Algorithm \ref{alg: staggered matching}.
The whole algorithm includes two phases. In Phase 1, the intuition of our algorithm is to partition the items into a temporary allocation $\mathcal{A}^{\prime}$ by allocating the item that has the highest value to the agent whose bundle has the lowest value in each iteration. This approach ensures that the allocation is envy-free up to one item (EF1) with respect to both the agents' and the centralized allocator's valuation functions.
In Phase 2, given the temporary allocation, we then follow specific rules to match and reallocate these bundles to the agents, which ensures the final allocation is CGEQ1 for the centralized allocator. 
Since we consider identical agents' valuations, this reallocation does not violate the EF1 property of each agent in $\mathcal{A^{\prime}}$, and the returned allocation is EF1+CGEQ1.






\begin{theorem}
\label{the: idential_valuations}
    Given an instance where the valuation of each agent is the same,  the Draft-and-Match Algorithm (Algorithm \ref{alg: staggered matching})  
    computes an EF1+CGEQ1 allocation in polynomial time.
\end{theorem}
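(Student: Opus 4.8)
The plan is to analyze the two phases of the Draft-and-Match algorithm separately, exploiting the fact that identical agent valuations decouple the agent-side and group-side guarantees. The key preliminary observation I would record is this: once Phase~1 fixes a partition of $\mathcal{O}$ into $n$ bundles, any reassignment of \emph{whole} bundles to agents leaves every pairwise comparison $v(A_i)$ versus $v(A_j)$ unchanged, since all agents evaluate bundles with the same $v$. Hence agent-side EF1 is a property of the partition alone and is immune to the Phase~2 reshuffling, which frees Phase~2 to optimize purely for the allocator's group-level objective.

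First I would prove that the temporary allocation $\mathcal{A}'$ produced in Phase~1 is EF1 with respect to both $v$ and $u$. The natural route is induction on the iterations, maintaining the invariant that the current partial partition is simultaneously EF1 under both valuations. The workhorse is the standard ``give the top item to the currently poorest bundle'' argument: when an item $o$ is appended to a bundle that was minimum under the relevant valuation, any other agent's envy toward that bundle is bounded by $o$ itself, which certifies EF1 after deleting $o$. The subtlety I would treat carefully is that a greedy guided by a \emph{single} valuation need not preserve balance under the other --- a small example already shows a pure $u$-greedy can leave the partition badly $v$-unbalanced, and symmetrically for $v$. This is exactly why the selection must be match-based, choosing the item--bundle pairing so that both invariants survive each step; establishing simultaneous preservation is the first genuine difficulty.

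Next I would show Phase~2 attains CGEQ1. Here I would sort the $n$ bundles by their $u$-value and assign them to groups by a size-weighted balancing rule that repeatedly hands the next bundle to the group whose current per-capita $u$-value is smallest among those still short of their quota $|G_p|$, the target being that every group's average $u$-value lands within a single item of every other's. Proving this is the main obstacle, and is where I expect to invoke the duplication-plus-quota argument of Lemma~\ref{lem:staggered matching-2}: because groups have different sizes, I would normalize the comparison between $G_p$ and $G_q$ by duplicating their received bundles $|G_q|$ and $|G_p|$ times respectively, clearing the denominators in the per-capita inequality and reducing it to a comparison of two collections of equal cardinality. A quota-counting inequality then bounds how many ``low'' bundles any group can accumulate relative to its size, and combining this with the $u$-EF1 balance of the bundles from Phase~1 --- which guarantees no single bundle is an outlier large enough to swamp a group's average --- yields the required single-item slack.

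Finally, I would assemble the pieces: Phase~1 delivers a doubly-EF1 partition, Phase~2's reassignment certifies CGEQ1 while, by the decoupling observation, preserving agent EF1, so the returned allocation is EF1+CGEQ1. Polynomial running time is then immediate, since Phase~1 performs $O(m)$ greedy iterations each comparing over $n$ bundles and Phase~2 sorts the bundles and runs a single balanced sweep over the $k$ groups. I expect the cross-size per-capita analysis of Phase~2 to be the crux, with the simultaneous doubly-EF1 preservation in Phase~1 a close second.
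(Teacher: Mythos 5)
Your overall architecture matches the paper's exactly: the decoupling observation (reassigning whole bundles under identical $v$ cannot change any pairwise comparison, so agent-side EF1 is a property of the partition alone), a Phase-1 lemma that $\mathcal{A}'$ is EF1 under both $v$ and $u$, and a Phase-2 CGEQ1 argument via bundle duplication plus a quota inequality. Your Phase-2 sketch is in substance the paper's Lemma \ref{lem:staggered matching-2}: duplicate $G_p$'s bundles $|G_q|$ times and $G_q$'s $|G_p|$ times to clear denominators, use the $u$-EF1 of the Phase-1 bundles to discount one item from the first bundle $G_q$ receives, and close the pairwise comparison with a counting bound of the form $\left(\sum_{i=0}^{k} f_i\right)\cdot|G_p| \le k\cdot|G_q|+|G_p|$. (One small caution: that inequality is derived from the \emph{count-based} selection rule $\arg\min_p t_p/|G_p|$; you describe the rule as picking the group with smallest per-capita $u$-\emph{value}, which is not the algorithm's rule and would not feed directly into a counting argument.)

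The one genuine gap is Phase 1. You correctly observe that a greedy guided by a single valuation can leave the partition unbalanced under the other, and you conclude that the selection ``must be match-based... so that both invariants survive each step'' --- but you leave this as an acknowledged difficulty rather than resolving it, so your induction step has no concrete selection rule to verify. The paper's resolution is structural rather than a delicate simultaneous matching: items are sorted in non-increasing $u$-order and released in batches of $n$, one item per agent per batch. Since every item in batch $\ell$ has $u$-value at least that of every item in batch $\ell+1$, the $u$-side EF1 holds telescopically, $u(A_i') = \sum_{s} u(o_i^{(s)}) \ge \sum_{s\ge 2} u(o_j^{(s)}) = u(A_j'\setminus\{o_j^{(1)}\})$, \emph{regardless} of how each batch is matched to agents. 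That completely frees the within-batch assignment to serve $v$ alone (highest-$v$ item in the batch to the currently $v$-poorest bundle, the standard argument you cite, which works because each agent receives exactly one item per batch). In short, the two invariants are maintained by two independent mechanisms at different granularities --- across batches for $u$, within batches for $v$ --- and without identifying this, the Phase-1 lemma is not actually proved.
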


\begin{algorithm}[!htb]
 \caption{Draft-and-Match (DM)}
 \label{alg: staggered matching}
  \KwIn{An instance $\mathcal{I} = \langle \mathcal{O}, \mathcal{N}, \mathcal{G}, \boldsymbol{v}, u \rangle$ with identical agents' valuation functions}
  \KwOut{An EF1+CGEQ1 allocation $\mathcal{A}$}
  \SetKwData{Left}{left}\SetKwData{This}{this}\SetKwData{Up}{up}
  \SetKwFunction{Union}{union}\SetKwFunction{FindCompress}{findcompress}
  \BlankLine       
  Let $\mathcal{A}'= (\emptyset, \ldots, \emptyset)$  and $\mathcal{A}=(\emptyset, \ldots, \emptyset)$;\\

  {\color{gray}  - - - - -  Phase 1: Partition $\mathcal{O}$ into allocation $\mathcal{A}^{\prime}$ - - - - - } 
  
  Add $n- (m \mod n)$ dummy items where each agent and the centralized allocator have the valuation of zero to $\mathcal{O}$;\\
   Let $\mathcal{O}_s$ be the array of sorted goods with respect to $u$ in non-increasing order;\\
 
    \While{$\mathcal{O}_s \neq \emptyset$ }{
    Let $\mathcal{N}^{\prime} = \mathcal{N}$;\\
    Let $\mathcal{O}_n$ be the first $n$ items in $\mathcal{O}_s$ and $\mathcal{O}_s \leftarrow \mathcal{O}_s \setminus \mathcal{O}_n$;\\
     
     \While{$\mathcal{O}_n \neq \emptyset$ }{   
     $A_{i_{min}}^{'} \leftarrow A_{i_{min}}^{'} \cup \{o_{max}\}$, where $i_{min} \in \arg \min_{i \in \mathcal{N}^{\prime}} v(A_{i}^{\prime})$ and $o_{max} \in \arg \max_{o \in \mathcal{O}_n} v(o)$ (breaking ties arbitrarily);
    
    
      $\mathcal{N}^{\prime} = \mathcal{N}^{\prime} \setminus \{i_{min}\}$ and $\mathcal{O}_n \leftarrow \mathcal{O}_n \setminus \{o_{max}\}$;
    }
    }
   

    


 {\color{gray}   - - -    Phase 2: Match the bundles in $\mathcal{A}^{\prime}$ to agents - - -} 
 
     Assume that the groups are in non-decreasing order of size, i.e., $\lvert G_1 \lvert \leq \ldots \leq \lvert G_k \rvert$;\\

    $t_p \leftarrow 0$,  $ \forall p\in [k]$, $t_p$ is the total number of times that $G_p$ has been picked
so far;
     
     \While {$\mathcal{A}^{'} \neq (\emptyset, \ldots, \emptyset)$}{

     \If{$\exists t_{p} = 0 $}{
       $ p^* \leftarrow \min \{p \vert p \in [k] ~\textit{and}~ t_p = 0\}$;\\   
     }
     \Else{

     $p^* \leftarrow \arg \min_{p\in [k]}{\frac{t_p}{|G_p|}}$ (breaking ties by selecting $p$ that reaches $\min_{p\in [k]}{\frac{t_p}{|G_p|}}$ the latest);\\ 
     } 

     Arbitrarily choose one agent $i^* \in G_{p^{*}}$, where $A_{i^*} = \emptyset$;
     
     $A_{i^*} \leftarrow  A'_{i_{max}}$, where $A'_{i_{max}} \in \arg \max_{A'_{i} \in \mathcal{A}'} u(A_{i}^{'})$;\\
     
     $t_{p^*} \leftarrow t_{p^*}+1$;\\
     $A_{i_{max}}^{'} \leftarrow \emptyset $; 
    }    
    \Return $\mathcal{A}$ 
     



\end{algorithm}

Before proving Theorem \ref{the: idential_valuations}, we first give the following lemmas.
In Lemma \ref{lem:staggered matching-1}, we show that the temporary allocation $\mathcal{A}^{\prime}$ is EF1 with respect to both the agents' and the centralized allocator's valuation functions.
Then in Lemma \ref{lem:staggered matching-2}, we show that after the reallocation of bundles in $\mathcal{A}^{\prime}$, the returned allocation $\mathcal{A}$ is CGEQ1.






\begin{lemma}\label{lem:staggered matching-1}

    The allocation $\mathcal{A}'$ (computed in Phase 1) in Algorithm \ref{alg: staggered matching} is EF1 with respect to both the agents' and the centralized allocator's valuation functions. 
That is, for any $i, j \in [n]$, we have $v(A'_i) \geq v(A'_j \setminus \{o\})$ and $u(A'_i) \geq u(A'_j \setminus \{o^{\prime}\})$ for some $o,o^{\prime}\in A'_j$. 
\end{lemma}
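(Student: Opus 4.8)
The plan is to analyze Phase~1 of Algorithm~\ref{alg: staggered matching} directly, since the EF1 guarantee is established entirely during the drafting process and is unaffected by Phase~2 (by identical agent valuations, any permutation of the bundles preserves agent-EF1, and the allocator values bundles independently of which agent holds them). First I would observe the key structural feature of Phase~1: after padding with dummy items, the number of items is a multiple of $n$, and the algorithm proceeds in \emph{rounds}, where each round distributes exactly one item to each of the $n$ agents. Within a round, the items come from a block $\mathcal{O}_n$ of $n$ consecutive items in the $u$-sorted order, and they are handed out by repeatedly giving the highest-$v$-value remaining item in the block to the agent with the currently lowest $v$-value bundle.

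The core claim to extract is a \emph{per-round balancing invariant}. For the agents' valuation $v$, I would argue that at the start of each round all agents have received exactly the same number of items (one per completed round), and that the greedy ``give to the poorest'' rule keeps the bundle values tightly clustered. The standard argument is: let agent $j$ receive item $g^{(r)}_j$ in round $r$. Because at the moment agent $i$ is selected it has the minimum bundle value among those not yet served this round, the difference $v(A'_j) - v(A'_i)$ is controlled by the value of a single item. Concretely, I would show that after every round, for any two agents $i,j$ the bundle values satisfy $v(A'_i) \ge v(A'_j) - \max_{o \in A'_j} v(o)$, i.e. removing agent $j$'s largest item eliminates the envy; this is exactly EF1 for $v$. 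The induction step is the routine part: when an agent is chosen as $i_{min}$, it was weakly poorest, so adding one item cannot make any other agent envy it by more than that one item.

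The more delicate half is EF1 with respect to the allocator's valuation $u$, because the greedy rule uses $v$, not $u$. Here is where the sorting by $u$ does the work. The key point is that within a single round the $n$ items of the block $\mathcal{O}_n$ are the $n$ largest remaining $u$-values, so the $u$-values of items assigned in round $r$ all \emph{dominate} those assigned in round $r+1$: every item in round $r$ has $u$-value at least that of every item in round $r+1$. I would exploit this round-by-round domination to bound $u(A'_j) - u(A'_i)$ by the $u$-value of a single item. The cleanest route is to pair up rounds: since each agent gets exactly one item per round, agent $i$'s item from round $r$ has $u$-value at least agent $j$'s item from round $r+1$ (as the former is in an earlier, $u$-dominating block). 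Summing this telescoping comparison across all rounds leaves at most one unmatched item --- agent $j$'s item from the \emph{first} round --- so $u(A'_i) \ge u(A'_j) - u(g^{(1)}_j)$, which is EF1 for $u$ after removing that single item.

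The main obstacle I anticipate is making the $u$-domination argument fully rigorous despite the fact that the \emph{intra-round} assignment order is governed by $v$ and ties in both orderings are broken arbitrarily. The block-domination inequality is clean \emph{across} rounds but says nothing about the relative $u$-values \emph{within} a round; I must therefore be careful to match agent $i$'s round-$r$ item against agent $j$'s round-$(r+1)$ item (a cross-round pairing) rather than attempting an intra-round comparison. A clean way to organize this is to index each agent's items by round, write $u(A'_j) = \sum_{r} u(g^{(r)}_j)$, and compare the sum for $j$ shifted by one round against the sum for $i$, using $u(g^{(r)}_i) \ge u(g^{(r+1)}_j)$ which holds because round $r$'s block $u$-dominates round $r+1$'s block. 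I would need to confirm the edge case where the number of rounds differs is impossible here (padding guarantees every agent receives the same number of items), so the telescoping is exact and exactly one leftover term survives. Once this pairing is set up correctly, both EF1 statements follow, and the dummy items (valued $0$ by everyone) are harmless since their removal or inclusion never affects any $v$- or $u$-value.
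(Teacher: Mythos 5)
Your proposal is correct and follows essentially the same route as the paper's proof: an inductive "give the highest-$v$ item to the currently poorest agent" argument for agent-EF1, and a cross-round block-domination telescoping (pairing agent $i$'s round-$r$ item with agent $j$'s round-$(r+1)$ item, leaving only $j$'s first item unmatched) for allocator-EF1. You also correctly identify the one delicate point---that intra-round order is governed by $v$, so the $u$-comparison must be made across rounds---which is precisely how the paper handles it.
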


\begin{proof}
    We use mathematical induction to show that for each agent $i \in \mathcal{N}$ and the centralized allocator, 
    the temporary allocation  $\mathcal{A^{\prime}}$ is EF1 with respect to their valuation functions.
    Since all agents have the same valuation function, no envy cycle occurs in any allocation. 
    Hence, there is no need to eliminate envy cycles during the iteration. 
    This claim also holds for the centralized allocator.
    
    For the base case, it is clear that the empty bundle is EF1 with respect to the agents' valuation functions and the centralized allocator's valuation function.
    
    Next, for the induction step, we assume that in the $k$th iteration, any bundle in the partial allocation $\mathcal{A}^{\prime(k)} = (A_1^{\prime(k)}, \ldots, A_n^{\prime(k)})$ is EF1 with respect to the agents' valuation functions and the centralized allocator's valuation function, i.e., for any two bundles $A_i^{\prime(k)}$ and $A_{j}^{\prime(k)}$, both $v(A_{i}^{\prime(k)}) \geq v(A_{j}^{\prime (k)} \setminus \{o\})$ and $u(A_{i}^{\prime(k)}) \geq u(A_{j}^{\prime (k)} \setminus \{o\})$ hold for some $o \in A_{j}^{\prime (k)}$. We now show that this property continues to hold after the $(k+1)$th iteration.
    
    Fix two arbitrary agents $i, j \in \mathcal{N}$.
    Let $o^{(k+1)}_{i}$ and $o^{(k+1)}_{j}$ denote the item allocated to agent $i$ and $j$, respectively.
    Without loss of generality, assume that agent $j$ picks $o^{(k+1)}_{j}$ first, which implies that $v(A_i^{\prime(k)})\geq v(A_j^{\prime(k)})$, $v(o^{(k+1)}_{j}) \geq v(o^{(k+1)}_{i})$, and $u(o^{(k+1)}_{j}) \geq u(o^{(k+1)}_{i})$.
    For agent $i$, we have
    \begin{equation*}
        \begin{aligned}
         v(A_{i}^{\prime(k+1)})  \geq v(A_{i}^{\prime(k)}) \geq v(A_{j}^{\prime(k)}) = v(A_{j}^{\prime(k+1)} \setminus \{o^{k+1}_{j}\}),   
        \end{aligned}
    \end{equation*}
    and for agent $j$, we have
    \begin{equation*}
        \begin{aligned}
     v(A_{j}^{\prime(k+1)}) & = v(A_{j}^{\prime(k)}) +v (o_{j}^{(k+1)}) \\
     &\geq v(A_{i}^{\prime(k)} \setminus \{o\}) + v(o_{i}^{(k+1)}) = v(A_{i}^{\prime(k+1)} \setminus \{o\})  
        \end{aligned}
    \end{equation*} 
    for some $o  \in A_{i}^{\prime(k)}$.
    Therefore, after the $(k+1)$th iteration, any bundle in the partial allocation is still EF1 to each agent's valuation function.
    Since the items are indexed in non-increasing order of the centralized allocator's valuation, and the first $n$ items from the remaining items are allocated to agents in each iteration, we get $u(o^{(\ell)}_{j}) \geq u(o^{(\ell +1)}_{i})$ for any two agents $i, j \in \mathcal{N}$, and any $\ell \in [k]$.

    Then, for the centralized allocator, it holds that $\forall i,j \in \mathcal{N}$
    \begin{equation*}   
    u(A_{i}^{\prime(k+1)}) = \sum_{s = 1}^{k+1}u(o^{(s)}_{i}) 
      \geq \sum_{s=2}^{k+1}u(o^{(s)}_{j}) = u(A_{j}^{\prime(k+1)} \setminus \{o^{(1)}_{j}\}) 
    \end{equation*}
    
    Thus, after the $(k+1)$th iteration, any bundle in the partial allocation is still EF1 to the centralized allocator's valuation function.
    This completes the induction and establishes the correctness of our proof.
\end{proof}

Since the agents' valuation functions are identical, it allows us to rearrange the bundles without violating fairness at the individual level. Leveraging this flexibility, we aim to rearrange the bundles to meet the group-level fairness (CGEQ1) while preserving EF1 for the agents.

\begin{lemma}
\label{lem:staggered matching-2}
    In Algorithm \ref{alg: staggered matching}, the returned allocation $\mathcal{A}$ is CGEQ1.
\end{lemma}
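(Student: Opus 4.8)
The plan is to analyze Phase 2 of Algorithm~\ref{alg: staggered matching} and show that the round-robin-style matching of bundles to groups, together with the fact that bundles are assigned in non-increasing order of allocator value, forces the per-agent group utilities to stay within a single-bundle gap of one another. First I would fix notation: for a group $G_p$, let $T_p = u(\bigcup_{i \in G_p} A_i)$ denote the total allocator value assigned to $G_p$ at termination, and recall $t_p$ counts how many bundles $G_p$ received. Since each selected agent receives exactly one bundle from $\mathcal{A}'$, we have $t_p = \lvert G_p \rvert$ at the end (every agent in a chosen group is eventually filled, and the loop runs until all bundles are distributed), so the quantity $\frac{t_p}{\lvert G_p \rvert}$ that the algorithm balances is precisely the ``fraction of the group that has been served.'' The key structural observation is that the selection rule---pick the group minimizing $t_p/\lvert G_p \rvert$, with the $t_p = 0$ tie-break handled first---ensures these served-fractions stay synchronized across groups throughout the execution.

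Next I would establish the core monotonicity fact: because $\mathcal{A}'$ is processed by always handing out the currently most valuable remaining bundle ($A'_{i_{max}} \in \arg\max_{A'_i} u(A'_i)$), the sequence of bundle values assigned over time is non-increasing. Consequently, if group $G_p$ receives its $s$-th bundle \emph{before} group $G_q$ receives its $s$-th bundle, then the value of $G_p$'s $s$-th bundle is at least that of $G_q$'s $s$-th bundle. Pairing up the $s$-th bundle received by each group across all rounds $s$, I would compare two groups $G_p$ and $G_q$. The balancing rule guarantees that at any point the counts satisfy a near-proportionality invariant: for any two groups, $\frac{t_p}{\lvert G_p\rvert}$ and $\frac{t_q}{\lvert G_q \rvert}$ never differ by the ``full step'' that would let one group pull ahead by more than one service round relative to the other. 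The hard part will be converting this counting invariant about $t_p/\lvert G_p\rvert$ into the \emph{value} statement required by CGEQ1, namely $\frac{T_p}{\lvert G_p\rvert} \ge \frac{T_q - u(o)}{\lvert G_q\rvert}$ for some item $o$, since the groups may have different sizes and the bundles have differing values; this is exactly the ``quota inequality'' combined with ``bundle duplication'' that the introduction flags as the core difficulty.

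To handle unequal group sizes, I would argue as follows. Normalize by thinking of each group $G_q$ as contributing $\lvert G_q\rvert$ ``slots,'' each slot receiving one bundle. The served-fraction balancing means that when we compare the per-slot average value $T_q/\lvert G_q \rvert$ against $T_p/\lvert G_p \rvert$, the bundles can be matched round-by-round so that each bundle counted toward $G_q$ in round $s$ is dominated in allocator value by the corresponding bundle counted toward $G_p$ in round $s$, \emph{except} possibly for the single most recently assigned bundle that pushed $G_q$ marginally ahead in the $t_q/\lvert G_q\rvert$ ordering. Removing that one extreme bundle $o$ (or rather, the single item witnessing the up-to-one relaxation inside it---here the ``bundle duplication'' device lets us treat a whole extremal bundle as the removable unit, then refine to a single item using additivity of $u$) restores the inequality $\frac{T_p}{\lvert G_p \rvert} \ge \frac{T_q - u(o)}{\lvert G_q\rvert}$. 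I would finish by verifying this holds symmetrically for every ordered pair $(G_p, G_q)$, which is precisely the CGEQ1 condition. The main obstacle throughout is ensuring the single removed item suffices uniformly across all group pairs despite size heterogeneity; I expect to lean on the non-increasing bundle-value ordering from Phase~1 to guarantee the ``lead'' any group holds is always bounded by the value of one (early, hence large) bundle, which is exactly what the up-to-one slack can absorb.
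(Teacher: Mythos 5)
Your plan follows essentially the same route as the paper's proof: non-increasing bundle values from the greedy hand-out in Phase~2, the ratio-balancing invariant on $t_p/\lvert G_p\rvert$, duplication of each group's bundles ($\lvert G_q\rvert$ copies for $G_p$'s bundles and $\lvert G_p\rvert$ copies for $G_q$'s) to turn the average comparison into a sorted pairwise comparison of $\lvert G_p\rvert\cdot\lvert G_q\rvert$ values, and a quota inequality bounding how far one group's served fraction can run ahead. However, two steps are genuinely underspecified in a way that matters. First, the quota inequality is asserted rather than derived; the paper pins it down as $\bigl(\sum_{i=0}^{k} f_i\bigr)\cdot\lvert G_p\rvert \leq k\cdot\lvert G_q\rvert + \lvert G_p\rvert$ (where $f_i$ counts $G_q$'s receipts between consecutive receipts of $G_p$), obtained by observing that just before $G_q$ took the bundle immediately preceding $G_p$'s $(k+1)$-th pick, its ratio was at most $G_p$'s, so the gap at $G_p$'s pick is at most $1/\lvert G_q\rvert$. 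Without this quantitative bound the pairwise domination claim cannot be verified.

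Second, and more importantly, your reduction from ``remove one extremal bundle'' to ``remove one item'' does not work as stated: additivity of $u$ alone cannot convert the removal of a whole bundle into the removal of a single item, since CGEQ1 only permits discounting one item, which removes strictly less value. The paper's fix is to invoke the allocator-side EF1 property of $\mathcal{A}'$ from Lemma~\ref{lem:staggered matching-1}: removing the most valuable item from $B_q^1$ (the \emph{first}, hence largest, bundle $G_q$ receives) leaves a bundle worth at most as much as every other bundle in $\mathcal{A}'$, so its $\lvert G_p\rvert$ copies can be placed last in $G_q$'s sorted list and are dominated by the copies of $B_p^{\lvert G_p\rvert}$. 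Your text wavers on which bundle is the problematic one (``most recently assigned'' versus ``early, hence large''); the troublesome case is precisely when $G_q$ is served before $G_p$ receives anything ($f_0\geq 1$), so it is the earliest bundle $B_q^1$ that breaks round-by-round domination and must be demoted via the one-item removal. Supplying the explicit quota inequality and the appeal to Lemma~\ref{lem:staggered matching-1} for the single-item step would close the argument and make it coincide with the paper's proof.
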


\begin{proof}
It suffices to show that for any two groups $G_p$ and $G_q$, $\frac{u(\bigcup_{i \in G_p}A_i)}{\lvert G_p \rvert } \geq \frac{u(\bigcup_{j \in G_q}A_j \setminus \{o\})}{\lvert G_q \rvert }$ holds for some item $o \in \bigcup_{j \in G_q}A_j$. 
Without loss of generality, we assume that $|G_p| \leq |G_q|$. 


Let $B_{p}^1, B_{p}^2, \ldots, B_{p}^{|G_p|}$ represent the bundles received by agents in $G_p$ during the re-allocation process, where $B^i$ is the bundle received by the agent in $G_p$ during the $i$-th allocation to $G_p$. Similarly, let $B_{q}^1, B_{q}^2, \ldots, B_{q}^{|G_q|}$ represent the bundles received by agents in $G_q$ during their respective allocations. Between the allocation of bundles $B_{p}^i$ and $B_{p}^{i+1}$ to $G_p$, note that some agents in $G_q$ may also receive bundles. Let $f_i$ denote the number of bundles received by agents in $G_q$ between the allocation of $B_{p}^i$ and $B_{p}^{i+1}$ for $1 \leq i \leq |G_p| - 1$. We define $f_0$ as the number of bundles received by $G_q$ before $B_{p}^1$ is allocated to $G_p$, and $f_{|G_p|}$ as the number of bundles received by $G_q$ after $B_{p}^{|G_p|}$ is allocated. We have $\sum_{i = 0}^{|G_p|} f_i = |G_q|$.

We first show that $G_p$ will not envy $G_q$, and the proof of the other direction ($G_q$ will not envy $G_p$) is similar. 

To prove the desired inequality 
$$
\frac{u\left(\bigcup_{i \in G_p} A_i\right)}{|G_p|} \geq \frac{u\left(\bigcup_{j \in G_q} A_j \setminus \{o\}\right)}{|G_q|}
$$
for some item $o \in \bigcup_{j \in G_q} A_j$, it is sufficient to show that 
\begin{equation}\label{lemmaineq1}
    u\left(\bigcup_{i \in G_p} A_i\right) \cdot |G_q| \geq u\left(\bigcup_{j \in G_q} A_j \setminus \{o\}\right) \cdot |G_p|.
\end{equation}

Given the EF1 property established previously, for any bundle, removing its most valuable item results in a lower value than any other bundle. Thus, we aim to prove a stronger condition: we always remove the most valuable item from the first bundle received by $G_q$ and then compare the average values of the  bundles received by both groups. Equivalently, we construct the following strategy to compare the two sides of the inequality:
\begin{itemize}
    \item For each bundle received by $G_p$, duplicate it $|G_q|$ times.
   \item For each bundle received by $G_q$, duplicate it $|G_p|$ times.
\end{itemize}

This results in a total of $|G_p| \times |G_q|$ bundles for each group. We will now compare the total value of these duplicated bundles.

We observe that after $G_p$ receives a bundle, $G_q$ will receive at least one bundle before the next time $G_p$ gets another one. This occurs because $|G_p|$ is relatively smaller than $|G_q|$. 

At the moment when $G_p$ is selected to receive a new bundle, the allocation ratio $\frac{t_p}{|G_p|}$ must be smaller than or equal to $\frac{t_q}{|G_q|}$. We will now prove that this difference is at most $\frac{1}{|G_q|}$. This is because $G_p$ cannot receive two bundles consecutively, which means that the last bundle before $G_p$'s new allocation must have been received by $G_q$. Since $G_q$ received this last bundle, its allocation ratio $\frac{t_q}{|G_q|}$ was smaller than or equal to $\frac{t_p}{|G_p|}$ right before receiving it (otherwise, $G_p$ would have been chosen to receive the bundle). After $G_q$ receives the bundle, its ratio increases by exactly $\frac{1}{|G_q|}$. Therefore, we have 
$$
\frac{\sum_{i=0}^{k} f_i}{|G_q|} \leq \frac{k}{|G_p|} + \frac{1}{|G_q|}.
$$
Rearranging this inequality, we obtain
\begin{equation}\label{Lemma2ineq}
\left(\sum_{i=0}^{k} f_i \right) \cdot |G_p| \leq k \cdot |G_q| + |G_p|.
\end{equation}

Next, consider the order of the duplicated bundles for both groups. For $G_q$, the value of the duplicated bundles is arranged in descending order as follows: first $B_{q}^{2}, B_{q}^{3}, \dots, B_{q}^{|G_q|}$, each repeated $|G_p|$ times. Finally, $B_{q}^{1}$, with its most valuable item removed, is repeated $|G_p|$ times. Similarly, for $G_p$, the value of its duplicated bundles is arranged in descending order as $B_{p}^1, B_{p}^2, \dots, B_{p}^{|G_p|}$, each repeated $|G_q|$ times.

We can perform a pairwise comparison of the total values of corresponding duplicated bundles from both groups. Specifically, we compare the $i$th highest value bundle in $G_p$ with the $i$th highest value bundle in $G_q$. For the bundles received by $G_p$, consider the $((k-1)|G_q| + 1)$th to $(k \cdot |G_q|)$th bundles, for $1 \leq k \leq |G_p|$. These bundles have the value of $u(B_{p}^k)$, which is smaller than the values of at most 
\[
\sum_{i = 0}^{k - 1} f_i \cdot |G_p| - |G_p|
\]
bundles received by $G_q$ (in the duplicated scenario). By Inequality (\ref{Lemma2ineq}), $
\sum_{i = 0}^{k - 1} f_i \cdot |G_p| - |G_p|$ is at most $(k - 1) \cdot |G_q|$. 

Hence, when performing the pairwise comparison between the $i$th highest value bundle in $G_p$ and the $i$th highest value bundle in $G_q$, the value of the bundles in $G_p$ is always greater than or equal to the value of the corresponding bundles in $G_q$. Therefore, Inequality (\ref{lemmaineq1})
holds, implying the desired condition 
\[
\frac{u\left(\bigcup_{i \in G_p} A_i\right)}{|G_p|} \geq \frac{u\left(\bigcup_{j \in G_q} A_j \setminus \{o\}\right)}{|G_q|}
\]
for some item $o \in \bigcup_{j \in G_q} A_j$.
\end{proof}

\begin{proof}[Proof of Theorem \ref{the: idential_valuations}]
By Lemmas \ref{lem:staggered matching-1} and \ref{lem:staggered matching-2}, and the fact that each agent has the same valuation, it can be concluded that the final allocation is EF1+CGEQ1.
Next, let us consider the time complexity. 
Without loss of generality, we assume that $m \geq n$.
In the first part of our algorithm, it takes $O(m\log m)$ to sort the items, and then for each iteration, selecting an agent takes $O(n)$ time, and choosing their favorite item takes $O(m)$ time. 
Note that there are $O(\lceil \frac{m}{n} \rceil)$ iterations.
In the second part of our algorithm, there are $O(n)$ iterations in the while loop.
For each iteration, selecting the target group and agent takes $O(nk)$ time, and choosing the bundle with the highest value from the centralized allocator's perspective takes $O(n)$ time.
Therefore, the total running time of our algorithm is $O(m^2+m \log m + n^2k)$.
\end{proof}

\begin{figure}[tb]

    \centering
    \scalebox{1}{
    \begin{tikzpicture}
        \definecolor{gpcolor}{RGB}{173,216,230}
        \definecolor{gqcolor}{RGB}{255,182,193}
        \node at (0, 0)   (b1) [draw, fill=gqcolor, minimum size=0.75cm] {$B_{q}^1$};
        \node at (1.25, 0) (b2) [draw, fill=gpcolor, minimum size=0.75cm] {$B_{p}^1$};
        \node at (2.5, 0)   (b3) [draw, fill=gqcolor, minimum size=0.75cm] {$B_{q}^2$};
        \node at (3.75, 0) (b4) [draw, fill=gqcolor, minimum size=0.75cm] {$B_{q}^3$};
        \node at (5, 0)   (b5) [draw, fill=gpcolor, minimum size=0.75cm] {$B_{p}^2$};
        \node at (6.25, 0) (b6) [draw, fill=gqcolor, minimum size=0.75cm] {$B_{q}^4$};
        \node at (7.5, 0)   (b7) [draw, fill=gqcolor, minimum size=0.75cm] {$B_{q}^5$};
        \draw[thick, ->] (-0.5, -0.5) -- (8, -0.5);
        \node at (7.5, -0.65) {time};
    \end{tikzpicture}
    }
    \caption{
        Example where $|G_p| = 2 $ and $|G_q| = 5$, with $G_q$ receiving the first bundle during the reallocation process. We denote $B_{z}^{i}$ as the bundle received by $G_{z}$ in the $i$-th allocation.
        The red squares represent bundles received by $G_q$, and the blue squares represent bundles received by $G_p$.
    }

    \label{BqandBp}
\end{figure}
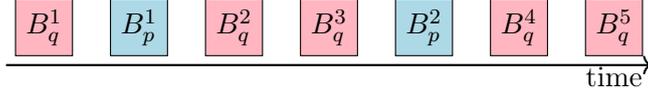

\section{Ordered Valuations}\label{sec:: ordered instance}

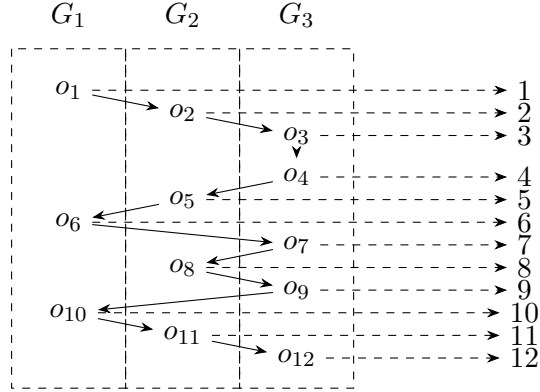
\begin{figure}[tb]
\begin{center}

\begin{tikzpicture}[->, >=Stealth, node distance=1.5cm]
    \node (o1) at (0, 0) {$o_1$};
    \node (o2) at (1.5, -0.3) {$o_2$};
    \node (o3) at (3, -0.6) {$o_3$};
    \node (o4) at (3, -1.15) {$o_4$};
    \node (o5) at (1.5, -1.45) {$o_5$};
    \node (o6) at (0, -1.75) {$o_6$};
    \node (o7) at (3, -2.05) {$o_7$};
    \node (o8) at (1.5, -2.35) {$o_8$};
    \node (o9) at (3, -2.65) {$o_9$};
    \node (o10) at (0, -2.95) {$o_{10}$};
    \node (o11) at (1.5, -3.25) {$o_{11}$};
    \node (o12) at (3, -3.55) {$o_{12}$};

    \draw[->] (o1) -- (o2);
    \draw[->] (o2) -- (o3);
    \draw[->] (o3) -- (o4);
    \draw[->] (o4) -- (o5);
    \draw[->] (o5) -- (o6);
    \draw[->] (o6) -- (o7);
    \draw[->] (o7) -- (o8);
    \draw[->] (o8) -- (o9);
    \draw[->] (o9) -- (o10);
    \draw[->] (o10) -- (o11);
    \draw[->] (o11) -- (o12);
    \draw[dashed] (-0.75, 0.55) rectangle (0.75, -3.95);
    \node (G1) at (0, 1){$G_1$};
    \draw[dashed] (0.75, 0.55) rectangle (2.25, -3.95); 
    \node (G2) at (1.5, 1){$G_2$};
    \draw[dashed] (2.25, 0.55) rectangle (3.75, -3.95);
    \node (G3) at (3, 1){$G_3$};

   \node (p1) at (6, 0) {$1$};
   \node (p2) at (6, -0.3) {$2$};
   \node (p3) at (6, -0.6) {$3$};
   \node (p4) at (6, -1.15) {$4$};
   \node (p5) at (6, -1.45) {$5$};
   \node (p6) at (6, -1.75) {$6$};
   \node (p7) at (6, -2.05) {$7$};
   \node (p8) at (6, -2.35) {$8$};
   \node (p9) at (6, -2.65) {$9$};
   \node (p10) at (6, -2.95) {$10$};
   \node (p11) at (6, -3.25) {$11$};
   \node (p12) at (6, -3.55) {$12$};
    \foreach \i in {1,...,12} {
        \draw[dashed] (o\i) -- (p\i);
    }
\end{tikzpicture}

\caption{An illustration of the allocation process of the first twelve items.
We use $\ell_i$ to denote the picking order of agent $i$ in her group.
Assume that there are three groups $G_1$, $G_2$, and $G_3$, where each group has $3$, $4$, and $5$ agents respectively. 
The arrow means the sequence of the allocation of these twelve items.
For example, in $G_1$, agent 1 with $\ell_1 = 1$ picks $o_1$, agent 6 with $\ell_6 = 2$ picks $o_6$, and agent 10 with $\ell_{10} = 3$ picks $o_{10}$.
Then, in the following iterations, if $G_1$ receives some item, the algorithm will follow the order to select the target agent.
} 
\label{fig: ideo_case}
\end{center}


\end{figure}

In this section, we consider the instance $\mathcal{I}$ with ordered valuations, where each agent $i \in \mathcal{N}$ and the centralized allocator share the same ranking or preference for all items. Specifically, $
 v_i (o_1) \geq \cdots \geq v_i(o_m)
$
and 
$
 u(o_1) \geq \cdots \geq u(o_m)
$.
In university settings, students typically agree on the relative order of course preferences (e.g., core before electives), but differ in the intensity of those preferences due to individual constraints such as major, prerequisites, or scheduling. EF1 remains a realistic fairness notion under such variations, while centralized rules determine access priorities. The challenge in this setting arises when certain items are valued oppositely by the agents and the centralized allocator. In such cases, ordered valuations may help us circumvent this issue. 


 Inspired by the algorithm that computes a weighted EF1 allocation in \cite{chakraborty2021weighted}, we propose the Synchronous Picking Sequence Algorithm, detailed in Algorithm \ref{alg: ido_case}. Our algorithm operates by allocating a set of items in several batches, where each batch has $n$ items. 
If there are fewer than $n$ items left, we can add some dummy items that have a zero value for the agents and the centralized allocator. 
For each item within a batch, there are two phases of allocation. 
In the first phase, the algorithm assigns the item to a group. 
In the second phase, the item is allocated to a specific agent within that group. 
Every agent receives exactly one item per batch. 
By structuring the allocation in this way, the algorithm mirrors a specialized round-robin algorithm.
Figure \ref{fig: ideo_case} illustrates the allocation process of the first batch (first twelve items) for an instance where there are three groups that have 3, 4, and 5 agents, respectively, as computed by Algorithm \ref{alg: ido_case}.

\begin{algorithm}[tb]
 \caption{Synchronous Picking Sequence (SPS) }\label{alg: ido_case}
 \label{alg: ido_case}
  \KwIn{An instance $\mathcal{I} = \langle \mathcal{O}, \mathcal{N}, \mathcal{G}, \boldsymbol{v}, u \rangle$ with ordered valuation functions}
  \KwOut{An EF1+CGEQ1 allocation $\mathcal{A}$}
Let $\mathcal{A} = (\emptyset, \ldots, \emptyset)$;

Add $n- (m \mod n)$ dummy items whose value is zero for the agents and the centralized allocator to $\mathcal{O}$;

 Assume that the groups are ordered in non-decreasing order of size, i.e., $\lvert G_1 \lvert \leq \ldots \leq \lvert G_k \rvert$;
 
Set $t_p \leftarrow 0$, $\forall p \in [k]$ and $\ell_{i} \leftarrow 0$, $\forall i \in \mathcal{N}$;

\While{$\mathcal{O} \neq \emptyset$}{

 Let $\mathcal{O}_n$ be the first $n$ items in $\mathcal{O}$ and $\mathcal{O} \leftarrow \mathcal{O} \setminus \mathcal{O}_n$;

 \While{$\mathcal{O}_n \neq \emptyset$}{

Let $o_{max} \in \arg \max_{ o \in \mathcal{O}_n}u(o)$;

 {\color{gray}       Phase 1: Decide which group gets this item      }
 
\If{$\exists t_{p} = 0 $}{
       $p^* \leftarrow \min \{p \vert p \in [k] ~\textit{and}~ t_p = 0\}$;\\   
     }
     \Else{

     $p^* \leftarrow \arg \min_{p\in [k]}{\frac{t_p}{|G_p|}}$ (breaking ties by selecting $p$ that reaches $\min_{p\in [k]}{\frac{t_p}{|G_p|}}$ the latest);\\ 
     }

 {\color{gray}   Phase 2: Decide which agent picks this item  }
 
\If{ $\exists i^* \in G_{p^*}$ such that $A_{i^*} = \emptyset$  }{

$A_{i^*} \leftarrow \{o_{max}\} $ and $\ell_{i^*} \leftarrow t_{p^*}$;

}
\Else{

Find the agent $i^*$ whose label $\ell_{i^*}$ equals  $t_{p^*} \mod \lvert G_{p} \rvert$, and $A_{i^*} \leftarrow A_{i^*} \cup \{o_{max}\}$;

}

$t_{p^*} \leftarrow t_{p^*}+1$ and $\mathcal{O}_n \leftarrow \mathcal{O}_n \setminus \{o_{max}\}$;

}
}

\Return $\mathcal{A}$ 
\end{algorithm}


\begin{theorem}
\label{the: ordered_instance}
    Given an instance with ordered valuations, the Synchronous Picking Sequence (Algorithm \ref{alg: ido_case}) computes an EF1+CGEQ1 allocation in polynomial time.
\end{theorem}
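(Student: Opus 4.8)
The plan is to verify the two guarantees separately---EF1 for the agents and CGEQ1 for the allocator---both resting on a single structural fact: in Algorithm~\ref{alg: ido_case} every agent receives exactly one item in each batch of $n$ items. I would first isolate this \emph{per-batch apportionment} property. Reading the group-selection rule (Phase~1) as an apportionment process, running it is equivalent to merging, in increasing order, the fraction sequences $\{s/|G_p| : s \ge 1\}$ over all groups $p$, where the $s$-th smallest fraction of $G_p$ marks the moment $t_p$ reaches $s$. The fractions of value at most $c$ are exactly those with $s \le c|G_p|$---that is $c|G_p|$ per group and $cn$ in all---and they precede every fraction of value exceeding $c$, so after the first $c$ batches $t_p = c|G_p|$ for every $p$. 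The special ``$t_p = 0$'' branch only reorders the first $k$ selections and still leaves every $t_p = 1$, while the ties at integer boundaries reorder selections within a batch without changing the per-batch counts. Hence each batch gives $G_p$ exactly $|G_p|$ items, the within-group round-robin (Phase~2) hands each of its agents exactly one of them, and over the $T = m'/n$ batches (with $m'$ the padded, zero-valued item count) every agent ends up with exactly $T$ items.

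Granting this, EF1 between \emph{any} two agents $i$ and $j$---whether in the same group or not---follows from a one-batch shift. Let $i^t$ and $j^t$ be the items $i$ and $j$ receive in batch $t$. Because items are processed in non-increasing $u$-order, which under ordered valuations is also non-increasing $v_i$-order, every batch-$t$ item is worth at least every batch-$(t+1)$ item to $i$; thus $v_i(i^t) \ge v_i(j^{t+1})$ for all $t$, and summing gives $v_i(A_i) \ge \sum_{t \ge 2} v_i(j^t) = v_i\big(A_j \setminus \{j^1\}\big)$. Discarding $j$'s first-batch item therefore certifies EF1 uniformly, subsuming both the intra-group and inter-group cases.

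The substantive work is CGEQ1, which I would obtain by transporting the argument of Lemma~\ref{lem:staggered matching-2} from the bundle level to the item level. The group-selection rule of Algorithm~\ref{alg: ido_case} coincides with Phase~2 of the DM algorithm, and items are assigned to groups in non-increasing $u$-value order, so the items $C_p^1, C_p^2, \dots$ that a group $G_p$ collects satisfy $u(C_p^1) \ge u(C_p^2) \ge \cdots$. Fixing $|G_p| \le |G_q|$, I would duplicate each of $G_p$'s items $|G_q|$ times and each of $G_q$'s items $|G_p|$ times (now $T|G_p|\,|G_q|$ objects on each side), delete the top item $C_q^1$ from $G_q$, and compare the two sorted lists blockwise. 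The driver is the quota bound: at the instant $G_p$ is about to take its $k$-th item, the count satisfies $t_q \le \frac{(k-1)|G_q|}{|G_p|} + 1$, because when $G_q$ was last selected its ratio $\frac{t_q-1}{|G_q|}$ did not exceed $G_p$'s ratio at that earlier time, itself at most the current $\frac{k-1}{|G_p|}$. Consequently, after removing $C_q^1$, at most $(k-1)|G_q|$ of $G_q$'s duplicated objects are worth at least $u(C_p^k)$; these occupy positions above the block holding the copies of $C_p^k$, so $G_p$'s entry dominates $G_q$'s at every position of that block, and the blocks exhaust all positions. Summation yields $|G_q|\,u\big(\bigcup_{i \in G_p} A_i\big) \ge |G_p|\,u\big(\bigcup_{j \in G_q} A_j \setminus \{C_q^1\}\big)$, i.e.\ CGEQ1, and the symmetric pairing covers the opposite ordering. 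Running time is dominated by the $O(m \log m)$ sort plus $O(k)$ and $O(n)$ per item for the group and agent choices, giving a polynomial bound.

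The step I expect to be the main obstacle is confirming that Lemma~\ref{lem:staggered matching-2}'s counting survives the move to $T$ batches: there each group receives exactly $|G_p|$ bundles, whereas here it accumulates $T|G_p|$ items, so I must verify that the quota bound $t_q \le \frac{(k-1)|G_q|}{|G_p|} + 1$ holds at \emph{every} step of the long weighted round-robin---including while the ``$t_p = 0$'' branch is still active---rather than only within a single round. The monotonicity of the counters makes this go through, but it is exactly the point where a naive transcription of the earlier lemma would be incomplete.
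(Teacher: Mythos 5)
Your proof is correct, and its top-level decomposition---EF1 for agents and CGEQ1 for the allocator handled separately, each via a round-robin-type argument---matches the paper's. The two routes diverge in how each half is discharged. For CGEQ1 the paper gives a one-sentence argument: it observes that CGEQ1 is weighted EF1 among groups (weights $|G_p|$, common valuation $u$) and defers to \cite{chakraborty2021weighted}, whose weighted picking sequence is exactly the Phase-1 group-selection rule. You instead re-derive the guarantee from first principles by transporting the duplication-plus-quota argument of Lemma~\ref{lem:staggered matching-2} from bundles to items, and you explicitly check that the quota bound $t_q \le (k-1)|G_q|/|G_p| + 1$ survives across all $T$ batches (including while the ``$t_p=0$'' branch is active)---precisely the point where a naive transcription of that lemma would be incomplete, and which the paper's citation-based argument never surfaces. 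Your derivation of the quota bound directly from the argmin property is in fact cleaner than the paper's own version, which leans on the ``cannot receive two bundles consecutively'' step, and it covers both orderings of $|G_p|$ and $|G_q|$ symmetrically. For EF1 the paper simply asserts the process ``can be regarded as the round-robin protocol''; your per-batch apportionment fact ($t_p = c|G_p|$ after $c$ batches, hence exactly one item per agent per batch) together with the batch-shift inequality $v_i(i^t) \ge v_i(j^{t+1})$ is the precise content behind that assertion, and has the added robustness of not requiring the within-batch picking order to be identical across batches. In short, the paper's route buys brevity by citation; yours buys a self-contained, fully verified argument.
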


\begin{proof}
 For the centralized allocator, the notion of CGEQ1 can be seen as the variant of the notion of weighted EF1. In \cite{chakraborty2021weighted}, they showed that a weighted EF1 allocation can be computed in polynomial time, so it is not hard to see that the final allocation is CGEQ1.  
 For agents, the whole algorithm has two phases.
 In the first phase, each agent is relabelled in her group.
 In the second phase, if one group receives one item, following the rule of the label, the agent who has the smallest number picks this item.
 If there is a tie, the agent with the lowest index is selected.
 Besides that, each agent and the centralized allocator have the same preference, which means the item chosen by the centralized allocator in each iteration is also every agent's favorite item.
 Therefore, the whole allocation process for agents can be regarded as the round-robin protocol, and the final allocation is EF1 to agents.
\end{proof}

\section{Binary Allocator Valuations}
\label{sec: binary_allocator}

\begin{algorithm}[!tb]
 \caption{Clustering-Based Dual-Flow Picking (CD$^2$P)}
 \label{alg: Binary_allocator}
  \KwIn{An instance $\mathcal{I} = \langle \mathcal{O}, \mathcal{N}, \mathcal{G}, \boldsymbol{v}, u \rangle$ with binary allocator valuation function}
  \KwOut{An EF1+CGEQ1 allocation $\mathcal{A}$}
Let $\mathcal{A} = (\emptyset, \ldots, \emptyset)$;


Denote the collection of items with $u (o) = 1$ as $\mathcal{O}^1$ and the collection of items with $u (o)=0$ as $\mathcal{O}^2$;
 
Set $t_p \leftarrow 0$, $\forall p \in [k]$ and $\ell_{i} \leftarrow 0$, $\forall i \in \mathcal{N}$;\\
Set $t \leftarrow 0$;


\While{$\mathcal{O}^1 \neq \emptyset$}{




\If{ $\exists i$ such that $A_{i} = \emptyset$  }{
\If{$\exists t_{p} = 0 $}{
       $p^* \leftarrow \min \{p \vert p \in [k] ~\textit{and}~ t_p = 0\}$;\\   
     }
     \Else{

     $p^* \leftarrow \arg \min_{p\in [k]}{\frac{t_p}{|G_p|}}$ (breaking ties by selecting $p$ that reaches $\min_{p\in [k]}{\frac{t_p}{|G_p|}}$ the latest);\\ 
     }


Find agent $i^* \in G_{p^*}$ with $A_{i^*} = \emptyset$;\\
$A_{i^*} \leftarrow \{o_{max}\} $ \text{where} $o_{max} \in \arg\max_{ o \in \mathcal{O}^1}v_{i^*}(o)$;\\
$\ell_{i^*} \leftarrow t$;\\
 $t_{p^*}\gets t_{p^*}+1$;\\ 
}


\Else{

Find agent $i^*$ whose label $\ell_{i^*}$ equals  $t$\\
$A_{i^*} \leftarrow A_{i^*} \cup \{o_{max}\}$ \text{where} $o_{max} \in \arg \max_{ o \in \mathcal{O}^1}v_{i^*}(o)$;

}
$t\leftarrow (t+1)  \mod  n $;\\
$\mathcal{O}^1 \leftarrow \mathcal{O}^1 \setminus \{o_{max}\}$;

}

$t \gets n - 1;$\\


\While{$\mathcal{O}^2 \neq \emptyset$}{

 Find agent $i^*$ with $\ell_{i^*}$ equals $t$;\\
     $A_{i^*} \leftarrow A_{i^*} \cup \{o_{max}\}$ \text{where} $o_{max} \in \arg \max_{ o \in \mathcal{O}^2}v_{i^*}(o)$;\\
     $\mathcal{O}^2 \leftarrow \mathcal{O}^2 \setminus \{o_{max}\}$;\\
$t \gets (t - 1) \mod n$;
 }

\Return $\mathcal{A}$ 
\end{algorithm}

In this section, we consider the instance $\mathcal{I}$ where the centralized allocator has a binary valuation function, i.e., for each item $o$, either $u(o) = 0$ or $u(o) = 1$ holds.
We show that an EF1+CGEQ1 allocation always exists, which can be computed by the Dual-Flow Picking Algorithm (CD$^2$P, Algorithm \ref{alg: Binary_allocator}) in polynomial time.

We cluster the set of considered items $\mathcal{O}$ into two groups: $\mathcal{O}^{1}$, where the value of each item is 1 from the allocator's perspective, and $\mathcal{O}^2$, where the value of each item is 0 from the allocator's perspective. 
The idea behind Algorithm \ref{alg: Binary_allocator} is as follows: First, we figure out a special order for agents. This order is designed so that when we allocate items in $\mathcal{O}^{1}$ using a round-robin approach, it satisfies the CGEQ1 property. Once the allocation of $\mathcal{O}^1$ is completed, we proceed to allocate items in $\mathcal{O}^2$. 
Note that the allocation of items in $\mathcal{O}^{2}$ does not affect the CGEQ1 property.
For this allocation, we use the reverse of the previously determined sequence and perform another round-robin distribution based on the agents' valuation functions.

In vaccine or emergency-supply allocation, agencies first distribute critical items (allocator values $1$) before non-critical ($0$). This motivates our $\{0,1\}$ allocator-value case and the rule ``allocate $\mathcal{O}^{1}$ before $\mathcal{O}^{2}$," mirroring public-health prioritization under scarcity \cite{faden2020sage}.

\begin{theorem}
\label{the: binary_ef1_CGEQ1}
    Given an instance where the centralized allocator has the binary valuation function, the Group-Decided Round Robin Algorithm (Algorithm \ref{alg: Binary_allocator}) computes an EF1+CGEQ1 allocation in polynomial time.
\end{theorem}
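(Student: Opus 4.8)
The plan is to decouple the two guarantees, since they depend on disjoint features of the allocation. Observe first that the centralized allocator's value of any bundle counts only items of $\mathcal{O}^1$, because every item of $\mathcal{O}^2$ contributes $0$ to $u$. Hence CGEQ1 is determined entirely by the first while-loop (the allocation of $\mathcal{O}^1$), and the second loop that distributes $\mathcal{O}^2$ is irrelevant to it. So I would establish CGEQ1 using only Phase~1 and establish EF1 using both phases together. Throughout, let $A_i^{1}$ and $A_i^{2}$ denote the items agent $i$ receives in the first and second loops, respectively.

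For CGEQ1, I would first unwind the first loop to show that it is exactly a group-balanced round robin: the $j$-th item of $\mathcal{O}^1$ to be processed is handed to the agent carrying label $(j-1)\bmod n$, where the labels $0,1,\dots,n-1$ are assigned, in order, to the agents as they receive their first item. Writing $M=\lvert\mathcal{O}^1\rvert=qn+r$ with $0\le r<n$, the agents with labels $0,\dots,r-1$ then receive $q+1$ items of $\mathcal{O}^1$ and the rest receive $q$, so $u\!\left(\bigcup_{i\in G_p}A_i\right)=q\lvert G_p\rvert+b_p$, where $b_p$ is the number of agents of $G_p$ among the first $r$ to be selected. The crux is then a balancing invariant for the group-selection rule, identical in spirit to Lemma~\ref{lem:staggered matching-2}: when group $G_q$ is chosen for its $b_q$-th time it minimizes $t_p/\lvert G_p\rvert$, so $\frac{b_q-1}{\lvert G_q\rvert}\le \frac{b_p}{\lvert G_p\rvert}$ for every $G_p$, i.e.\ $\frac{b_q}{\lvert G_q\rvert}-\frac{b_p}{\lvert G_p\rvert}\le \frac{1}{\lvert G_q\rvert}$. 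Rearranging, this is precisely $\frac{u(\bigcup_{i\in G_p}A_i)}{\lvert G_p\rvert}\ge \frac{u(\bigcup_{j\in G_q}A_j)-1}{\lvert G_q\rvert}$, and removing one value-$1$ item from $G_q$ (or noting $u(\bigcup_{j\in G_q}A_j)=0$ when $G_q$ holds none) yields CGEQ1 for every ordered pair.

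For EF1 among agents, I would read each phase as an ordinary round robin in which every agent takes her favourite remaining item, but with reversed orders: Phase~1 uses the label order $a_0,a_1,\dots,a_{n-1}$ and Phase~2 uses $a_{n-1},\dots,a_1,a_0$. The standard round-robin facts then give, for $a_s,a_t$ with $s<t$: in Phase~1 the earlier picker $a_s$ has no envy toward $a_t$, i.e.\ $v_{a_s}(A^1_{a_s})\ge v_{a_s}(A^1_{a_t})$, while the later picker satisfies $v_{a_t}(A^1_{a_t})\ge v_{a_t}(A^1_{a_s}\setminus\{o'\})$ with $o'$ the first Phase-1 pick of $a_s$; in Phase~2 the roles reverse, so $v_{a_t}(A^2_{a_t})\ge v_{a_t}(A^2_{a_s})$ and $v_{a_s}(A^2_{a_s})\ge v_{a_s}(A^2_{a_t}\setminus\{o\})$ with $o$ the first Phase-2 pick of $a_t$. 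Adding the two bounds that point the same way gives $v_{a_s}(A_{a_s})\ge v_{a_s}(A_{a_t}\setminus\{o\})$ and $v_{a_t}(A_{a_t})\ge v_{a_t}(A_{a_s}\setminus\{o'\})$, so every ordered pair is EF1; the reversal is exactly what confines the single removable item to one phase in each direction.

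The hard part will be the bookkeeping that makes the two ``no-envy'' halves genuinely exact rather than up-to-one-item, since those are what let the combined bound lose only a single item: I must check that whenever the later picker receives an item in a round, the earlier picker has already received a weakly better one in the same round, including in the final, possibly incomplete round of each phase (this is precisely where reversing the order, so that the agent disadvantaged in Phase~1 leads in Phase~2, does the work). I would also handle the boundary case $\lvert\mathcal{O}^1\rvert<n$, where not every agent acquires a distinct label, by padding the labelling or by treating unlabeled agents as trailing the picking order; the group-selection and round-robin arguments are unaffected. Finally, the running time is routine: Phase~1 runs $\lvert\mathcal{O}^1\rvert$ iterations and Phase~2 runs $\lvert\mathcal{O}^2\rvert$ iterations, each performing an $O(k)$ group selection and an $O(m)$ favourite-item scan, so the whole procedure is polynomial.
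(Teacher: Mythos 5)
Your proposal is correct and follows essentially the same route as the paper: CGEQ1 is reduced to the first loop alone via the balancing property of the $\min_{p} t_p/\lvert G_p\rvert$ selection rule, and EF1 is obtained by pairing the exact no-envy half of the forward round robin on $\mathcal{O}^1$ with the up-to-one-item half of the reversed round robin on $\mathcal{O}^2$, after padding both item classes to multiples of $n$. The only cosmetic difference is that the paper establishes CGEQ1 as an invariant maintained inductively over the partial allocation, whereas you derive the same quota inequality directly from the final per-group counts.
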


We divide the proof of Theorem \ref{the: binary_ef1_CGEQ1} into two parts, corresponding to the two properties of the allocation computed by Algorithm \ref{alg: Binary_allocator}: CGEQ1 and EF1. These properties are established in the following lemmas.

\begin{lemma}
\label{lem: binary_CGEQ1}
    The output allocation in Algorithm \ref{alg: Binary_allocator} is CGEQ1.
\end{lemma}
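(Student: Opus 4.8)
The plan is to exploit the binary structure of $u$ to reduce CGEQ1 to a purely combinatorial balance statement about how the value-$1$ items are spread across groups. First I would observe that since every item in $\mathcal{O}^2$ satisfies $u(o)=0$, the second while loop contributes nothing to any group's allocator-value; hence $u(\bigcup_{i \in G_p} A_i) = n_p$, where $n_p$ denotes the number of $\mathcal{O}^1$ items received by group $G_p$, and CGEQ1 depends only on the first while loop. Moreover, removing a value-$1$ item decreases a group's count by exactly one, so choosing the removed item optimally shows that the CGEQ1 inequality $\frac{u(\bigcup_{i\in G_p}A_i)}{|G_p|} \geq \frac{u(\bigcup_{j\in G_q}A_j \setminus\{o\})}{|G_q|}$ is equivalent to $\frac{n_p}{|G_p|} \geq \frac{n_q - 1}{|G_q|}$ for every ordered pair of groups, the case $n_q = 0$ being trivial.

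Next I would characterize the $\mathcal{O}^1$-allocation as a round-robin over a fixed agent ordering. As long as some agent is still empty, the loop fills one new agent per iteration and assigns it the current label $t = 0,1,2,\dots$; thus the $\ell$-th agent to be filled receives label $\ell$, and I denote this agent $\pi_\ell$. Once every agent is filled (after exactly $n$ iterations when $|\mathcal{O}^1|\ge n$), the counter $t$ has returned to $0$ and the loop merely revisits labels $0,1,\dots,n-1$ cyclically. Consequently the entire distribution of $\mathcal{O}^1$ is the round-robin that hands pick $j$ to agent $\pi_{j \bmod n}$. Writing $|\mathcal{O}^1| = cn + r$ with $0 \le r < n$, each agent receives $c$ items from the complete rounds plus one extra item precisely when its label lies in $\{0,\dots,r-1\}$. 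Hence $n_p = c\,|G_p| + b_p$, where $b_p$ counts the agents of $G_p$ among the first $r$ to be filled, and the target inequality collapses to $\frac{b_p}{|G_p|} \geq \frac{b_q - 1}{|G_q|}$ since the uniform term $c$ cancels.

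Finally I would establish this partial-round balance as an invariant of the group-priority selection rule. Because the first $r$ fills are exactly the first $r$ iterations of the rule that picks the group minimizing $\frac{t_p}{|G_p|}$ (with the $t_p = 0$ convention), I have $b_p = t_p$ after those $r$ iterations, so it suffices to show the invariant $\frac{t_p}{|G_p|} \geq \frac{t_q - 1}{|G_q|}$ holds throughout the rule for all $p,q$. I would prove this by induction on the number of iterations: it holds vacuously at the start where $t_p \equiv 0$, and when the rule increments $t_{p^*}$ for the group $p^*$ of minimum current ratio, the only affected pairs are those with $q = p^*$ or $p = p^*$. For $q = p^*$ the new requirement $\frac{t_p}{|G_p|} \ge \frac{(t_{p^*}+1)-1}{|G_{p^*}|} = \frac{t_{p^*}}{|G_{p^*}|}$ is exactly the minimality of $p^*$'s ratio; for $p = p^*$ the left-hand side only increases, preserving the old inequality. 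This completes the argument.

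The main obstacle I anticipate is the second step: recognizing that, although the explicit $t_p$-bookkeeping is frozen after the filling stage, the subsequent label-cycling reproduces the very ordering $\pi$ fixed during filling, so the whole allocation is genuinely a round-robin whose first round is governed by the group-priority rule. Establishing the invariant itself is then a short induction, but it must hold after an \emph{arbitrary} prefix of $r \le n$ steps rather than only at termination, which is why I phrase it as a loop invariant instead of invoking the final-allocation conclusion of Lemma \ref{lem:staggered matching-2} directly.
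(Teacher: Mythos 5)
Your proof is correct, and its heart---the loop invariant $\frac{t_p}{|G_p|} \geq \frac{t_q-1}{|G_q|}$ maintained by the minimum-ratio group-selection rule---is the same inductive step the paper uses. Where you diverge is in scope: the paper runs its induction over \emph{every} item of $\mathcal{O}^1$, tacitly treating each pick as governed by the ratio rule, whereas in Algorithm~\ref{alg: Binary_allocator} that rule is literally invoked only while some agent is still empty (the first $n$ picks); afterwards the allocation is driven by cycling the labels $0,\dots,n-1$. You close that gap explicitly by observing that the label-cycling reproduces the filling order $\pi$, decomposing $n_p = c\,|G_p| + b_p$ so that the complete rounds cancel and only the partial round---where the ratio rule genuinely applies and $b_p = t_p$---needs the invariant. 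Your reduction of CGEQ1 to the count inequality $\frac{n_p}{|G_p|} \geq \frac{n_q-1}{|G_q|}$ (with the $n_q=0$ case handled separately) is also a cleaner formalization of the paper's implicit use of binarity. The net effect is that your argument is slightly longer but actually matches the algorithm as written, while the paper's is shorter at the cost of eliding the post-filling phase; if anything, your version is the one you should keep.
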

\begin{proof}
    It suffices to show that at any point during the allocation process of the first bundle, the partial allocation is CGEQ1.
    We prove the statement by induction. Before allocating any item, the allocation is trivially CGEQ1. Fix two groups $p$ and $q$. Suppose that after allocating $k$ items, $G_p$ and $G_q$ receive $c_1$ and $c_2$ items, respectively, and $c_1 / |G_p| \leq c_2 / |G_q|$. If the $(k+1)$-th item is not allocated to $G_p$ or $G_q$, then $G_p$ and $G_q$ will not envy each other. Otherwise, the $(k+1)$-th item must go to $G_p$. Thus, we have $c_1' / |G_p| > c_1 / |G_p| \geq (c_2 - 1) / |G_q|$, and $c_2 / |G_q| \geq c_1 / |G_p| = (c_1' - 1) / |G_p|$.
\end{proof}

\begin{lemma}
\label{lem: binary_ef1}
The allocation computed by Algorithm \ref{alg: Binary_allocator} is EF1.
\end{lemma}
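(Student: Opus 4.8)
The plan is to show that Algorithm~\ref{alg: Binary_allocator} produces an EF1 allocation by analyzing the two round-robin phases separately and then combining them. The key structural observation is that the algorithm fixes a permutation of the agents via the labels $\ell_i \in \{0, 1, \ldots, n-1\}$: in the first \textbf{while} loop (allocating $\mathcal{O}^1$) agents are served in increasing label order $0, 1, \ldots, n-1, 0, 1, \ldots$, and in the second \textbf{while} loop (allocating $\mathcal{O}^2$) they are served in the reverse order, starting from label $n-1$ and decreasing. Crucially, every time an agent is served in either phase, she picks her own most-valued remaining item (with respect to $v_{i^*}$) from the current pool. So although the group/label assignment in Phase~1 is driven by the allocator-side ratios $t_p/|G_p|$, the \emph{item} an agent receives at each of her turns is always her greedy favorite, which is exactly the round-robin picking rule that underlies the standard EF1 guarantee.

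First I would set up the standard round-robin EF1 argument adapted to a two-phase, forward-then-reverse picking sequence. Let the agents be relabeled $\sigma(0), \sigma(1), \ldots, \sigma(n-1)$ according to their labels $\ell$. Phase~1 is a run of round-robin in the order $\sigma(0), \ldots, \sigma(n-1)$ over the item set $\mathcal{O}^1$ (each agent always taking her favorite remaining item), and Phase~2 is a run of round-robin in the \emph{reverse} order $\sigma(n-1), \ldots, \sigma(0)$ over $\mathcal{O}^2$. The target is to show that for any two agents $a = \sigma(s)$ and $b = \sigma(t)$, agent $a$ does not envy $b$ after removing a single item from $A_b$. I would split the contributions to $v_a(A_b)$ into the items $b$ received in Phase~1 and those received in Phase~2, and bound each stream against the corresponding stream of $a$'s picks.

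The main step is the round-by-round ``charging'' argument. For a single round-robin pass in a fixed order where each agent greedily takes her favorite, the classical fact is: if $a$ picks before $b$ in every round, then $v_a(A_a) \ge v_a(A_b)$ exactly (no envy at all), because the item $a$ grabs in round $r$ is weakly preferred by $a$ to anything $b$ can grab in round $r$; whereas if $a$ picks after $b$, then $a$ may envy $b$ by at most $b$'s first-round pick, giving EF1. Applying this to each phase: in Phase~1 the serving order is forward, and in Phase~2 it is reversed, so whichever of $a, b$ is disadvantaged in Phase~1 (picks later) is advantaged in Phase~2 (picks earlier), and vice versa. The forward/reverse pairing is precisely the device that lets one phase compensate the other. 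Concretely, if $a$ comes before $b$ in the forward order, then in Phase~1 agent $a$ never envies $b$, and in Phase~2 (reverse order) $a$ picks after $b$, so $a$'s Phase-2 envy is bounded by a single item---the first item $b$ takes in Phase~2---yielding overall EF1 after removing that one item. The symmetric case ($a$ after $b$ in forward order) is handled by discarding $b$'s first Phase-1 pick instead. I would formalize this by aligning the two agents' picks round by round within each phase and summing the per-round inequalities $v_a(\text{$a$'s pick}) \ge v_a(\text{$b$'s pick in the same round})$, which hold whenever $a$ precedes $b$ in that phase's order.

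The point I expect to require the most care is the bookkeeping of the ``offset'' between the two agents' picking positions across the two phases, together with the possibility that an agent enters Phase~1 at a non-trivial point (the first \textbf{if} branch assigns empty-bundle agents in group-ratio order rather than strict label order, so the very first item each agent receives is handled specially while $\ell$ is being set). I would verify that once all agents have been initialized and have a label, both phases are genuinely round-robin in label order, so the clean forward/reverse structure holds for all but the initialization picks; and I would check that the dummy-free binary setting (values in $\{0,1\}$ for the allocator but arbitrary additive $v_i$ for agents) does not interfere, since EF1 here is a statement purely about the agents' valuations $v_i$ and the greedy picks, independent of $u$. The only genuine obstacle is confirming that the single item removed to witness EF1 can be chosen consistently for every ordered pair $(a,b)$---namely $b$'s first pick in whichever phase $a$ is disadvantaged---and that this removal simultaneously discharges the envy accumulated across both phases rather than just one.
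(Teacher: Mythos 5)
Your proposal is correct and follows essentially the same route as the paper's proof: align the two agents' picks round by round, use $v_a(\text{$a$'s $r$-th pick}) \ge v_a(\text{$b$'s $r$-th pick})$ in the phase where $a$ precedes $b$ and the offset inequality $v_a(\text{$a$'s $r$-th pick}) \ge v_a(\text{$b$'s $(r+1)$-th pick})$ in the other phase, then sum so that the single removed item is $b$'s first pick in the phase where $a$ is disadvantaged. The initialization concern you flag is benign, since the labels are assigned in exactly the order agents first receive items, so Phase~1 is round-robin in label order throughout.
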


\begin{proof}
We assume that $|\mathcal{O}^1| = k_1 n$ and $|\mathcal{O}^2| = k_2 n$, as we can always achieve this by adding dummy items with value $0$ from all agents' perspectives.

Fix two agents $i$ and $j$ with $i < j$. Denote $i$'s items as $o_{i, 1}, \dots, o_{i, k_1 + k_2}$ and $j$'s items as $o_{j, 1}, \dots, o_{j, k_1 + k_2}$. We have $v_i(o_{i, k}) \geq v_i(o_{j, k})$ for $1 \leq k \leq k_1$ and $v_j(o_{j, k}) \geq v_j(o_{i, k+1})$ for $1 \leq k < k_1$. Similarly, we have $v_j(o_{j, k}) \geq v_j(o_{i, k})$ for $k_1 + 1 \leq k \leq k_1 + k_2$ and $v_i(o_{i, k}) \geq v_i(o_{j, k+1})$ for $k_1 + 1 \leq k < k_1 + k_2$.

Thus, we have
$$
\begin{aligned}
\sum_{k = 1}^{k_1 +k_2} v_i(o_{i , k})  &\geq \sum_{k = 1}^{k_1} v_i(o_{i , k}) + \sum_{k = k_1 + 1}^{k_1 +k_2 - 1} v_i(o_{i , k}) \\
& \geq \sum_{k = 1}^{k_1} v_i(o_{j, k}) + \sum_{k = k_1 + 2}^{k_1 + k_2} v_i(o_{j, k}) \\
& = \left(\sum_{k = 1}^{k_1 + k_2} v_i(o_{j, k})\right) - v_i(o_{j, k_1 + 1}),
\end{aligned}
$$
and
$$
\begin{aligned}
\sum_{k = 1}^{k_1 +k_2} v_j(o_{j , k})  &\geq \sum_{k = 1}^{k_1 - 1} v_j(o_{j , k}) + \sum_{k = k_1 + 1}^{k_1 + k_2} v_j(o_{j , k}) \\
& \geq \sum_{k = 2}^{k_1} v_j(o_{i, k}) + \sum_{k = k_1 + 1}^{k_1 + k_2} v_j(o_{i, k}) \\
&= \left(\sum_{k = 1}^{k_1 + k_2} v_j(o_{i, k})\right) - v_j(o_{i, 1}), 
\end{aligned}
$$
implying that agents $i$ and $j$ will not envy each other up to one item.
\end{proof}

\section{Centralized Group Maximin Share}
\label{sec: efficiency}
In the previous discussion, the allocator achieves group-level fairness among agents through an additional fairness requirement (CGEQ1). Now, we shift our focus to optimizing group-level fairness objectives directly, aiming to achieve fairness from a centralized perspective while still maintaining EF1 for the agents. This can be understood as the allocator striving to find the ``best" fair allocation.


In this case, the utilitarian social welfare ($\sum_{i=1}^{n} u(A_i)$) is not suitable to be the optimization objective since it not only remains invariant regardless of the allocation computed but also fails to reflect group-level fairness. Instead, we focus on the share-based fairness objective from the centralized allocator's perspective, which is called centralized group maximin share (CGMMS).
This definition is motivated by the well-studied notion---maximin share fairness (MMS) \citep{budish2011combinatorial}.
Our main goal is to find an allocation satisfying CGMMS from the centralized allocator's perspective and EF1 from the agents' perspective simultaneously.

\begin{definition}[Centralized Group Maximin Share]
Let $\mathcal{O}$ be the set of items and  $\Pi_{n}(\mathcal{O})$ be the set of $n$-partitions of $\mathcal{O}$ (which may be subject to some constraints).
The centralized group maximin share $\mathsf{CGMMS}$ is defined as:
\[
\mathsf{CGMMS}= \max_{\mathcal{A} \in \Pi_{n}(\mathcal{O})} \min_{G_p \in \mathcal{G}}
\frac{u(\cup_{i \in G_p}A_i)}{|G_p|}.
\]
An allocation $\mathcal{A}$ is centralized group maximin share fair (CGMMS) if $\min_{G_p \in \mathcal{G}} \frac{u(\cup_{i \in G_p}A_i)}{|G_p|}= \mathsf{CGMMS}$ holds.  
\end{definition}

First, regarding the existence of EF1+CGMMS, we give the following proposition to show that computing such an allocation is NP-hard.   
\begin{proposition} \label{prop:CGMMS+EF1}
    Computing a CGMMS allocation subject to EF1 for agents is strongly NP-hard.
\end{proposition}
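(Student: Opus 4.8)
The plan is to reduce from the \emph{3-partition} problem, which is strongly NP-complete: given $3m$ positive integers $a_1,\dots,a_{3m}$ with $\sum_{j} a_j = mB$ and $B/4 < a_j < B/2$ for every $j$, decide whether the indices can be split into $m$ triples each summing to exactly $B$. Since all the $a_j$ are bounded by a polynomial in the input size, any polynomial-time reduction that copies these numbers directly into item values will preserve the \emph{strong} form of NP-hardness, matching the 3-partition-based hardness already invoked for CGEQ in the introduction.

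Given a 3-partition instance, I would construct a fair-division instance with $n=m$ agents, each forming its own singleton group (so $|G_p|=1$ for all $p$), and $3m$ items $o_1,\dots,o_{3m}$. Set the allocator's value to $u(o_j)=a_j$ and let every agent share it, $v_i=u$. Because each group has a single agent, the per-capita value of a group's bundle is just its $u$-value, so the benchmark $\mathsf{CGMMS}$ subject to EF1 equals $\max\{\min_i u(A_i):\mathcal{A}\text{ is EF1}\}$. Two structural facts drive the reduction: the box constraints $B/4<a_j<B/2$ force any bundle of value exactly $B$ to contain exactly three items (two items sum to less than $B$, four to more than $B$); and the total value $mB$ distributed over $m$ agents gives $\min_i u(A_i)\le B$ for \emph{every} allocation, with equality forcing all bundles to have value exactly $B$.

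The two directions are then as follows. (i) If a valid 3-partition exists, assign each triple to one agent; every bundle has value exactly $B$, and this allocation is EF1 because removing the largest item from any bundle leaves value $B-a_{\max}<B/2<B=v_i(A_i)$, so the benchmark value $B$ is attained by a feasible EF1 allocation. (ii) If no valid 3-partition exists, then no allocation at all can make every bundle have value $\ge B$, since that would force an exact partition into value-$B$ triples; hence $\min_i u(A_i)<B$ for every allocation, EF1 or not. Combining (i) and (ii), the EF1-constrained CGMMS benchmark equals $B$ precisely when the 3-partition instance is a yes-instance. Consequently, any algorithm producing a CGMMS allocation subject to EF1 returns an allocation whose minimum per-capita value is $\ge B$ if and only if the instance is solvable; testing this single inequality on the output decides 3-partition, which establishes strong NP-hardness.

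I expect the main obstacle to be the decoupling argument implicit in combining (i) and (ii): I must argue that imposing EF1 neither destroys the optimal value $B$ when the instance is solvable (handled by the EF1 check in direction (i), which crucially uses $a_{\max}<B/2$) nor permits a value exceeding $B$ when it is not (handled by the unconditional bound in direction (ii)), so that the computational difficulty is carried entirely by the underlying partition structure rather than by the fairness constraint. A secondary point to verify is the strong-hardness bookkeeping: because the reduction merely copies the polynomially bounded integers $a_j$ into the values $u(o_j)$ and adds no large numbers, the reduction is a genuine reduction in the strong sense.
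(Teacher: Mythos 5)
Your reduction is essentially the paper's own: a reduction from 3-partition with $3m$ items, $m$ agents in singleton groups, and allocator values copied from the input integers; you are in fact more careful than the paper in checking that the EF1 constraint does not interfere (the paper simply asserts that it suffices to show unconstrained CGMMS computation is hard), and the strong-hardness bookkeeping is right. One small slip: the inequality $B-a_{\max}<B/2$ is false since $a_{\max}<B/2$, but it is immaterial because with identical valuations every bundle in the witness allocation has value exactly $B$, so that allocation is in fact envy-free and a fortiori EF1.
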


\begin{proof}
It suffices to show that computing a CGMMS allocation is NP-hard. We reduce from the classic 3-partition problem. Let $I = \{v_1, v_2, \dots, v_{3n}\}$ be an instance of 3-partition, where the goal is to decide whether $I$ can be split into $n$ subsets with equal total value.

We construct a corresponding instance of the fair allocation problem with $3n$ items and $n$ agents, each from a different group. Each item in the partition instance corresponds to an item in the allocation problem, and the centralized allocator assigns value $v_i$ to item $i$. 

In this case, if we could compute such an allocation efficiently, we could decide whether the original 3-partition instance is a yes-instance by checking whether all group shares are equal. Therefore, computing a CGMMS allocation is NP-hard. 
\end{proof}




Then, we notice that for agents with identical valuation functions or instances with ordered valuations, computing an EF1+CGMMS allocation is still strongly NP-hard. However, when the centralized allocator has a binary valuation function, it can be solved efficiently.

\begin{theorem} \label{thm: CGMMS}
    When the centralized allocator has a binary valuation function, computing a CGMMS allocation subject to EF1 for agents can be achieved in polynomial time.
\end{theorem}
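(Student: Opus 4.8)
The plan is to exploit that a binary allocator valuation collapses CGMMS into a purely combinatorial apportionment problem, and then to argue that the allocation produced by CD$^2$P (Algorithm \ref{alg: Binary_allocator}) already solves it while staying EF1. Since $u(o)\in\{0,1\}$, for any allocation we have $u\big(\bigcup_{i\in G_p}A_i\big)=n_p$, where $n_p$ is the number of items of $\mathcal{O}^1$ received by group $G_p$; the items of $\mathcal{O}^2$ and the agents' valuations are completely irrelevant to the allocator's objective. Hence
\[
\mathsf{CGMMS}=\max\Big\{\min_{p\in[k]}\tfrac{n_p}{|G_p|}:\ n_p\in\mathbb{Z}_{\ge 0},\ \textstyle\sum_{p}n_p=|\mathcal{O}^1|\Big\},
\]
a max-min apportionment of the $|\mathcal{O}^1|$ valuable items among the groups. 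The freedom to place the items of $\mathcal{O}^2$ anywhere is exactly what will later let us repair EF1 without touching the allocator's value.

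First I would pin down the optimum with the natural greedy rule: hand the valuable items out one at a time, each to a group currently minimizing $n_p/|G_p|$ (this is precisely the group-selection rule in Algorithm \ref{alg: Binary_allocator}). The key is the invariant that at the end every group $q$ satisfies $(n_q-1)/|G_q|\le \lambda:=\min_p n_p/|G_p|$, because the last valuable item added to $q$ was added while $q$ had the minimum ratio. From this invariant, optimality follows by a short exchange argument: if some feasible $(n_p^\ast)$ achieved $\lambda^\ast>\lambda$, the binding group of the greedy solution would need strictly more items, forcing (since the totals agree) some group $q$ with $n_q\ge n_q^\ast+1\ge \lambda^\ast|G_q|+1$, i.e. $n_q/|G_q|\ge \lambda^\ast+1/|G_q|>\lambda+1/|G_q|$, contradicting the invariant.

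Next I would verify that the schedule of CD$^2$P realizes such an optimal apportionment. Its group-selection step is exactly the greedy rule, so when $|\mathcal{O}^1|\le n$ the introduction phase alone hands the valuable items out in greedy group order and we are done. When $|\mathcal{O}^1|>n$, introduction first fills every group to $|G_p|$ (all ratios equal $1$) and freezes the greedy order into the labels $0,\dots,n-1$, with labels $0,\dots,k-1$ covering one agent per group. The subsequent cyclic round-robin over labels continues this order: each full cycle of $n$ picks raises every group ratio by exactly one (keeping the apportionment balanced and optimal), and the final partial cycle sends the leftover $s=|\mathcal{O}^1|\bmod n$ items to labels $0,\dots,s-1$, i.e. to the groups the greedy rule would pick next. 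Hence the output counts satisfy the invariant above, giving $\min_p n_p/|G_p|=\mathsf{CGMMS}$; EF1 is then immediate from Lemma \ref{lem: binary_ef1}, and the whole procedure clearly runs in polynomial time.

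I expect the main obstacle to be exactly this last reconciliation: CD$^2$P commits to a fixed cyclic label order, whereas greedy apportionment is adaptive, so I must show the two never diverge in a way that lowers the minimum ratio---most delicately in the final partial cycle and in the tie-breaking between groups that reach the current minimum simultaneously. If a direct argument proves fragile, the safe fallback is to run the greedy group-selection over all of $\mathcal{O}^1$ (not only during introduction) to fix the target counts $(n_p^\ast)$ outright, then allocate the valuable items by a forward round-robin and the items of $\mathcal{O}^2$ by the reverse round-robin, re-running the telescoping computation of Lemma \ref{lem: binary_ef1} to confirm EF1 survives the unequal per-group valuable counts.
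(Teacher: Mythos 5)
Your proposal is correct, and its overall architecture coincides with the paper's: reduce the allocator's objective to an apportionment of the $u$-valuable items $\mathcal{O}^1$ among the groups, realize the optimal per-group quotas through an agent ordering, run a forward round-robin on $\mathcal{O}^1$ followed by a reverse round-robin on $\mathcal{O}^2$, and invoke Lemma~\ref{lem: binary_ef1} for EF1 (your ``safe fallback'' is essentially verbatim the paper's construction, which first fixes the quotas, labels the agents who receive an extra valuable item first, and then applies the two-pass round-robin). Where you diverge is in how the optimal value $\mathsf{CGMMS}$ is certified and attained: the paper observes that there are only polynomially many candidate values $u(\cup_{i\in G_p}A_i)/|G_p|$ and checks feasibility of each, whereas you prove directly that the greedy water-filling rule (always give the next valuable item to a group minimizing $n_p/|G_p|$) is optimal, via the invariant $(n_q-1)/|G_q|\le\lambda$ and a counting/exchange argument. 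Your argument is more constructive and, in my view, more rigorous than the paper's somewhat terse feasibility check. Your primary route also establishes something the paper does not claim: that CD$^2$P itself already outputs a CGMMS allocation. The reconciliation you worry about does go through --- the introduction phase gives one item to each of the $k$ groups (so labels $0,\dots,k-1$ lie in distinct groups) and then follows the greedy ratio rule, and since the greedy selection from all-equal ratios $c$ minimizes $(c|G_p|+m_p)/|G_p| = c + m_p/|G_p|$, the within-cycle group order is translation-invariant and each full or partial cycle reproduces the introduction order; hence every valuable item is handed to a then-minimum-ratio group and your invariant holds for the entire run. Tie-breaking only selects among greedy-consistent runs, all of which satisfy the invariant, so optimality is unaffected. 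Either route yields a complete proof.
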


\begin{proof}
    We first notice that $\mathsf{CGMMS}$ (without considering EF1) can be computed in polynomial time. Since each item's value (from the allocator's perspective) is either 1 or 0, there are only polynomially many possible values for $u(\cup_{i \in G_p}A_i)/|G_p|$. Denote this set by $S$. Thus, we can enumerate values $x \in S$ and check whether $\mathsf{CGMMS} \geq x$ holds. The constraints are that each group should receive at least a certain number of items valued at 1, and the total number of items with value 1 should be sufficient to meet the requirements of all agents.

Let $r = \lfloor \mathsf{CGMMS} \rfloor$. Define $\mathcal{O}^1$ (resp. $\mathcal{O}^2$) as the set of items valued at 1 (resp. 0) from the allocator's perspective. There exists an allocation where each agent receives $r$ items, and additionally, some agents in each group may receive an extra item to ensure the allocation achieves $\mathsf{CGMMS}$. For these agents receiving extra items, we label them sequentially as $1, 2, \dots, (|\mathcal{O}^1| - n \cdot r)$. For the remaining agents, we label them as $(|\mathcal{O}^1| - n \cdot r + 1), \dots, n$.

Next, we show that there is an EF1 allocation that achieves $\mathsf{CGMMS}$. We apply a method similar to Algorithm \ref{alg: Binary_allocator}. Using the computed order, we perform a forward round-robin allocation of $\mathcal{O}^1$, followed by a reverse round-robin allocation of $\mathcal{O}^2$. By Lemma \ref{lem: binary_ef1}, the allocation computed is EF1, and it also achieves $\mathsf{CGMMS}$.
\end{proof}

\section{Conclusion}
\label{sec::conlusion}
In this paper, we study the fair division of indivisible items from the perspectives of agents and a centralized allocator. 
We propose to use EF1 and CGEQ1 to measure the fairness from the agents' and the centralized allocator's perspectives, respectively, and aim to compute allocations that satisfy EF1 and CGEQ1 simultaneously. 
We show that EF1+CGEQ1 allocations always exist for different classes of agents' and the centralized allocator's valuation functions, which can be computed in polynomial time. As for optimizing group-level fairness objectives, we show that, in general, finding a CGMMS allocation is hard, but an EF1+CGMMS allocation can be computed within polynomial time when the centralized allocator has a binary valuation function.

For future work, a natural direction is to determine whether an allocation satisfying the above two fairness notions exists in more general settings. \emph{We have searched for a non-existence counterexample with the aid of computer programs, but 
it seems to be hard to find such an instance.} Further, we can explore the setting where the agents and the centralized allocator have beyond additive valuation functions like submodular or subadditive valuation functions and design algorithms that efficiently return EF1+CGEQ1 allocations. For the group-level fairness objective, we can approximately optimize $\mathsf{CGMMS}$ subject to EF1 for agents.






\bibliographystyle{plainnat}
\bibliography{ref}
\end{document}